\documentclass{article}
\usepackage{graphicx, color} 
\usepackage{float}
\usepackage{adjustbox} 
\usepackage[margin=1.4in]{geometry}

\usepackage[onehalfspacing]{setspace}
\graphicspath{{figures/}}
\usepackage{amsthm}
\usepackage{xcolor}
\usepackage{cancel}
\usepackage{todonotes}
\usepackage[markup=underlined]{changes}
\usepackage[round]{natbib}

\newtheorem{example}{Example}
\newtheorem{definition}{Definition}
\newtheorem{remark}{Remark}

\newtheorem{theorem}{Theorem}
\newtheorem{corollary}{Corollary}
\newtheorem{proposition}{Proposition}
\newtheorem{lemma}{Lemma}

\usepackage{amsmath,amssymb,amsfonts}
\usepackage{soul}

\DeclareMathOperator*{\argmax}{arg\,max}

\newcommand{\rhot}{\tilde{\rho}}
\newcommand{\Ut}{\tilde{\mathcal{U}}}
\newcommand{\U}{\mathcal{U}}

\newcommand{\A}{\mathcal{A}}
\newcommand{\E}{\mathbb{E}}
\newcommand{\PP}{\mathcal{P}}

\title{Robust quasi-convex risk measures and applications 
\thanks{All the authors are members of Gruppo Nazionale per l’Analisi Matematica, la Probabilità e le loro Applicazioni (GNAMPA), Italy} }

\author{Francesca Centrone \thanks{Department of Economics and Business Studies, University of Eastern Piedmont, 28100 Novara, Italy. francesca.centrone@uniupo.it} 
\and Asmerilda Hitaj \thanks{Dipartimento di Economia, Università dell’Insubria, via Monte Generoso 71, 21100 Varese, Italy. asmerilda.hitaj@uninsubria.it} 
\and Elisa Mastrogiacomo \thanks{Dipartimento di Economia, Università dell’Insubria, via Monte Generoso 71, 21100 Varese, Italy. elisa.mastrogiacomo@uninsubria.it} 
\and Emanuela Rosazza Gianin \thanks{Department of Statistics and Quantitative Methods, University of Milano-Bicocca, Via Bicocca degli Arcimboldi 8, 20126 Milan, Italy. emanuela.rosazza1@unimib.it}}

\date{This version: \today}

\begin{document}

\maketitle

\begin{abstract}
This paper develops a unified framework for the robustification of risk measures beyond the classical convex and cash-additive setting. We consider general risk measures on $L^p$ spaces ($p\in [1,+\infty]$) and construct their robust counterparts through families of uncertainty sets that capture ambiguity. Two complementary mechanisms generate robust quasi-convex measures: in the first, quasi-convexity is inherited from the initial risk measure under convex uncertainty sets; in the second it comes from the quasi-convex (or c-quasi-convex) structure of the uncertainty sets themselves.  
Building on \citet{CVMMM11,frittelli-maggis}, we derive dual (penalty-type) representations for robust quasi-convex and cash-subadditive risk measures, showing that the classical convex cash-additive case arises as a special instance. We further analyze acceptance families and capital allocation rules under robustification, highlighting how ambiguity affects acceptability and the distribution of capital. 
\end{abstract}

\section{Introduction}
Quite recently, the construction of robust versions of risk measures has received increasing attention as a way to incorporate model uncertainty and ambiguity\footnote{The terms ambiguity and uncertainty are used interchangeably throughout the paper.} into financial decision-making. Various approaches to the robustness issue are possible, depending on different sources of uncertainty that can arise, just to recall some, from partial knowledge about the distributions of the risks or from the ambiguity of a Decision Maker's preferences (see, e.g., among the many, \cite{WangZiegel21}, \cite{WangXu23}). A central line of research also extends the classical framework of convex, cash-additive risk measures to robust settings. 
In particular, \citet{righi-arxiv} has recently introduced uncertainty at the level of any financial position $X$ (instead of considering a fixed uncertainty set) and has defined robust convex risk measures as: \begin{equation}
 \label{eq:rho-robust-def}
 \tilde\rho(X)\triangleq\sup_{Z\in \U_X}\rho(Z), \quad \mbox{ for any } X\in L^p,
 \end{equation}
 where $\rho$ is a cash-additive and convex risk measure -- hereafter referred to as the initial risk measure -- and $\U_X$ represents the uncertainty set for $X$ that is, a set of random variables representing possible perturbations or alternative realizations of $X$. \citet{righi-sifin} has extended the robust approach to risk sharing, also providing concrete examples based on the $p$-norm and the Wasserstein distance. \citet{pesenti-etal} have considered the dynamic framework and clarified how the choice of uncertainty sets affects time consistency. 

However, as discussed in the literature, convexity and cash-additivity may be too restrictive in realistic settings. In markets with frictions, illiquidity, or nonlinear pricing, convex combinations of positions need not preserve linear costs, and the addition of cash may not reduce risk one-to-one. This motivates the use of cash-subadditive and quasi-convex  risk measures, which generalize the convex, cash-additive paradigm while preserving economic interpretability.  
Cash-subadditivity has been introduced in \citet{ELK-R} to address discounting ambiguity due to uncertain interest rates.  Quasi-convex risk measures and their dual representation have been investigated from an axiomatic point of view in \citet{CVMMM11}, while their conditional counterpart has been studied in \citet{frittelli-maggis}. \citet{drapeau-kupper} have studied risk functionals that satisfy only quasi-convexity and monotonicity and have provided a one-to-one correspondence with risk-acceptance families and risk orders. 

Applications of quasi-convex risk measures confirm their flexibility. \citet{mastrog-rg-mor,mastrog-rg-mafe}, for instance, analyze portfolio optimization and optimal risk sharing under quasi-convex preferences, showing that such measures preserve diversification while capturing nonlinear pricing and illiquidity. Their results highlight the practical relevance of quasi-convex risk measures and motivate the need for a systematic robustification theory beyond the convex, cash-additive setting. 

Despite significant advances in representation and applications of quasi-convex and cash-subadditive risk measures, the construction of robust counterparts for such functionals under ambiguity has not yet been systematically addressed. This work develops a unified theory of robust quasi-convex (possibly cash-subadditive) risk measures and contributes to the existing literature in several ways, as summarized here below. 

First, we identify and connect two distinct approaches to construct a robust quasi-convex risk measure $\tilde{\rho}$ of the form \eqref{eq:rho-robust-def}. The first approach begins with a quasi-convex risk measure $\rho$ and a family $\U \triangleq (\U_X)_{X \in L^p}$ of uncertainty sets satisfying the condition: 
$$
\U_{\lambda X+ (1-\lambda )Y}\subseteq \lambda \U_{X}+ (1-\lambda)\U_Y, \quad \mbox{ for any } X,Y \in L^p, \lambda \in [0,1].
$$
This condition is commonly referred to as convexity of the uncertainty sets in the literature (see, e.g., \cite{righi-arxiv, pesenti-etal}).
Alternatively, we show that $\tilde{\rho}$ remains quasi-convex even without any assumption on $\rho$ (other than monotonicity), provided that the uncertainty sets satisfy: 
$$
\U_{\lambda X+ (1-\lambda )Y}\subseteq \U_{X} \cup\U_Y, \quad \mbox{ for any } X,Y \in L^p, \lambda \in [0,1].
$$ 
We refer to this property as quasi-convexity, borrowing this terminology from the literature on set-valued mappings (see, e.g., \cite{Seto-et-al2018}). Finally, we explore an alternative notion of quasi-convexity  -- defined with respect to the cone of positive financial positions  -- and demonstrate that this framework also yields a robust quasi-convex risk measure. 

Second, we characterize the largest family of uncertainty sets that generates a given
robust risk measure, extending the correspondence in \citet{pesenti-etal} beyond cash-additivity; we also identify the conditions under which robustification preserves monotonicity, (quasi-)convexity, law invariance, and continuity from above, clarifying how these properties are transferred from the (initial) risk measure $\rho$ and the family $\U$ of uncertainty sets. 

Third, building on \citet{CVMMM11,frittelli-maggis}, we derive dual representations for robust quasi-convex (not necessarily cash-subadditive)
risk measures, and show that the classical convex cash-additive case  -- including the dual forms in \citet{righi-arxiv}  -- arises as a special case of our formulation.  

Fourth, adopting the acceptance-family point of view of \citet{drapeau-kupper}, we relate the
acceptance families of $\rho$ to those of $\tilde{\rho}$ and provide tractable characterizations in a representative example. 

Finally, we extend the robustification framework to the capital allocation problem and provide illustrative examples based on widely used uncertainty sets, such as the Wasserstein and the $p$-norm balls, and quasi-convex families commonly used in set-valued analysis (see, e.g., \citet{Seto-et-al2018}).
\bigskip

The paper is organized as follows. Section~\ref{sec:preliminaries} introduces the setting and recalls the main notions of risk measures, with an emphasis on quasi-convexity, cash-subadditivity, and their dual representations. Section~\ref{sec: robust-uncertainty} develops the framework of (families of) uncertainty sets and robustification, presenting general properties and the correspondence with the largest family of uncertainty sets inducing $\tilde{\rho}$. Section~\ref{sec:dual_representation} establishes dual representations of robustified risk measures
under different assumptions on the initial risk measure and on the families of uncertainty sets, including the quasi-convex, cash-subadditive, and convex cash-additive cases. Section~\ref{sec:acceptance_families} studies
acceptance families of robust risk measures and their relationship with those of the initial risk measure. Applications to capital allocation principles and to the law invariant case with uncertainty sets based on the Wasserstein distance as well as some illustrative examples are provided in Section~\ref{sec: appl car-examples}. 
Finally, Section~\ref{sec:conclusions} concludes and outlines possible directions for future research.

\section{Preliminaries and setting}
\label{sec:preliminaries}

In this section, we introduce the notations and setting used in the paper, as well as the definitions and basic results on risk measures that will be useful later.

\subsection{Setting and notations}

We fix a probability space $(\Omega,\mathcal{F},P)$ where, as usual, $\Omega$ is a set of possible states of the world,
$\mathcal{F}$ is a $\sigma$-algebra of events, and $P$ is a reference
probability measure. Throughout the paper, random variables on $(\Omega,\mathcal{F},P)$ will represent profits and losses of financial positions where positive values indicate gains while negative ones losses. All equalities and inequalities between random variables should be meant to hold $P$-almost surely.
The expectations under the reference probability
$P$ will be denoted by $\mathbb{E}[\,\cdot\,]$. $\mathcal{P}$ stands for the set formed by all the probability measures that are absolutely continuous with respect to
$P$. The Radon–Nikodym derivative of $Q \in \mathcal{P}$ with respect to $P$ is denoted by $\tfrac{dQ}{dP}$.

We work with $L^p \triangleq L^p(\Omega,\mathcal{F},P)$ spaces,
with $p\in[1,+\infty]$, formed by all random variables that are $p$-integrable when $p \in [1,+\infty)$ or essentially bounded when $p=+\infty$. $L^p$ is endowed with the norm topology for $p \in [1,+\infty)$, while with the weak topology $\sigma(L^{\infty}, L^1)$ for $p=+\infty$. We denote $L^p_+ \triangleq \{ X \in L^p: X\geq 0\}$.

\subsection{Risk measures}

Financially speaking, we recall that risk measures can be interpreted as tools to assess the riskiness of financial positions. See \citet{ADEH-1999, Delb,follm-schied,fritt-rg} (and many subsequent papers) for a detailed treatment of the fundamentals of risk measures.

In this paper, with risk measure we mean a functional $\rho:L^p\to[-\infty,+\infty]$ that is \emph{monotone decreasing}, i.e.,
$X\le Y$ implies that $\rho(X)\ge \rho(Y)$.
Note that this convention of decreasing monotonicity is different from the ``actuarial''
convention assuming increasing monotonicity (see, e.g., \citet{pesenti-etal}).
\smallskip

The following additional axioms are sometimes imposed to risk measures:
\begin{itemize}
	\item convexity: $\rho(\lambda X+(1-\lambda)Y)\le \lambda \rho(X)+(1-\lambda)\rho(Y)$
	for all $X,Y\in L^p$, $\lambda\in[0,1]$;
	\item cash-additivity: $\rho(X+m)=\rho(X)-m$ for all $X\in L^p$, $m\in\mathbb{R}$;
    \item quasi-convexity: $\rho(\lambda X+(1-\lambda)Y)\le \max\{\rho(X),\rho(Y)\}$, for all $X, \, Y \in L^p$, $\lambda \in [0, 1]$;
	\item cash-subadditivity:
	$\rho(X+m)\le \rho(X)-m$ for all $X\in L^p$, $m\ge 0$;
    \item continuity from above: for any decreasing sequence $(X_n)_{n \in \mathbb{N}}$ in $L^p$ with $X_n \downarrow X$ as $n \to + \infty$, then $\lim_{n \to + \infty} \rho(X_n)=\rho(X)$.
\end{itemize}
See, among others, \citet{ADEH-1999, Delb,follm-schied,fritt-rg} for a discussion on coherent/convex risk measures, \citet{ELK-R} on cash-subadditive risk measures, and \citet{CVMMM11,frittelli-maggis} on quasi-convex risk measures.

It is worth recalling that convexity captures and incentivates risk diversification, while cash-additivity reflects the assumption that adding $m$ units of sure cash
reduces risk exactly by $m$. This allows to interpret cash-additive risk measures as capital requirements (or margins). 

In many applications, however, the axioms of convexity and cash-additivity are too restrictive. In markets with frictions, illiquidity or nonlinear pricing, convex
combinations of positions may not preserve linear costs, and under ambiguity on (stochastic) interest rates the effect of injecting cash is not necessarily linear. For this reason, cash-subadditive risk measures were introduced by \citet{ELK-R} to deal with discounting ambiguity, while quasi-convex risk measures were considered firstly by \citet{CVMMM11} to incentivate diversification of risk without cash-additivity.

We recall (see \citet{CVMMM11,frittelli-maggis}) that if $\rho: L^p \to [-\infty
,+\infty]$, with $p \in [1, +\infty]$, is monotone, quasi-convex, and continuous from above, then it admits the following dual representation
\begin{equation} \label{eq: repr rho}
\rho(X)= \sup_{Q \in \mathcal{P}} R_{\rho}(\E_Q[-X],Q), \quad \mbox{ for any } X \in L^p,  
\end{equation}
with a ``penalty-type'' functional 
\begin{equation} \label{eq: R}
R_{\rho}(t,Q)=R(t,Q) \triangleq \inf\{ \rho(Y): \E_Q[-Y]=t \}, \quad  \mbox{ for any } t \in \mathbb{R}, Q \in \mathcal{P},
\end{equation}
being a map $R_\rho:\mathbb{R}\times \mathcal{P}\to[-\infty,+\infty]$ that is monotone increasing and quasi-concave in $t$, with $\inf_{t \in \mathbb{R}} R(t,Q)=\inf_{t \in \mathbb{R}} R(t,Q')$ for any $Q, Q' \in \mathcal{P}$. See Theorem 2.9, Corollary 2.14, Lemma 3.2 of  \citet{frittelli-maggis}, and Theorem 3.1 of \citet{CVMMM11}.

In the special case of convex cash-additive risk measures $\rho$, $R_{\rho}$ reduces to
\[
R_\rho(t,Q)=t-c_\rho(Q), \quad  \mbox{ for any } t \in \mathbb{R}, Q \in \mathcal{P},
\]
where $c_\rho(Q)$ is the (minimal) penalty functional in the dual representation of $\rho$, i.e. 
\begin{equation} \label{eq: c-rho}
c_{\rho}(Q) \triangleq \sup_{X \in L^p} \{\E_Q[-X]-\rho(X)\}, \quad \mbox{ for any } Q \in \mathcal{P}.
\end{equation}
See \cite{fritt-rg, follm-schied, frittelli-maggis}. 

Quite recently, in order to model ambiguity in the risk evaluation \citet{righi-arxiv, righi-sifin} and \citet{pesenti-etal} built \textit{robust risk measures} by means of uncertainty sets. For each $X\in L^p$, an uncertainty set $\U_X \subseteq L^p$ is a set of random variables representing possible perturbations or alternative realizations of $X$. 

We recall from the aforementioned papers that, once the initial cash-additive risk measure $\rho$ is fixed,
\begin{itemize}
    \item given a family of uncertainty sets $\U=(\U_X)_{X \in L^{p}}$ (satisfying suitable properties), the induced $\U$-robust risk measure $\rhot$ is defined as 
   \begin{equation}\label{eq: rho robust}
\rhot(X)\triangleq \rhot^{\U}(X) \triangleq\sup_{Z \in \U_X} \rho(Z), \quad \mbox{ for any } X \in L^{p}.   
\end{equation}
In other words, $\tilde\rho(X)$ represents the worst-case value of $\rho$ when $Z$ may vary within the uncertainty set $\U_X$.
\end{itemize}

Furthermore, a one-to-one correspondence between robust risk measures and families of uncertainty sets can be found in \citet{pesenti-etal} in terms of the largest family of uncertainty sets. Indeed,  
\begin{itemize}
\item given a robust cash-additive risk measure $\rhot$, the largest family of uncertainty sets inducing $\rhot$ (called ``consolidated'' in \citet{pesenti-etal}) is defined as $\Ut \triangleq (\Ut_X)_{X \in L^{p}}$, with
\begin{equation} \label{eq: largest U}
\Ut_X \triangleq \bigcup \big\{\U_X \mbox{ uncertainty set}: \rhot(X)= \rhot^{\U}(X) \big\}, \quad \mbox{ for any } X \in L^p.
\end{equation}
\end{itemize}
We address to \citet{righi-arxiv,pesenti-etal} for precise statements and assumptions.






\section{Uncertainty sets and robustification}\label{sec: robust-uncertainty}

A central component in robustifying a risk measure is the choice of the structural properties of $\mathcal{U}$ -- such as monotonicity, convexity, or (quasi-)convexity -- which directly influence the regularity of the robustified functional. Two complementary approaches can be adopted to obtain a robust quasi-convex risk measure. In the first, quasi-convexity is inherited from the risk measure $\rho$, under convexity of the family  $\mathcal{U}$ of the uncertainty sets; in the second, the initial functional $\rho$ is left general (i.e., quasi-convexity is not needed), and quasi-convexity (or the stronger property of c-quasi-convexity) is instead imposed on the uncertainty sets of the family $\mathcal{U}$. In this case, the source of quasi-convexity shifts from the initial functional to the geometry of the uncertainty sets, so that even a merely monotone $\rho$ yields a robust quasi-convex risk measure. In the following, we formalize these two approaches.

We begin by introducing the structural assumptions on the families $\U$ of uncertainty sets that will be sometimes required later. Later on, we will study the impact of such assumptions on the corresponding robust risk measure. When not specified otherwise, the elements $(\U_X)$ of the family $\U$ are meant to belong to $L^p$ with general $p \in [1,+\infty]$.\bigskip

\begin{definition}
A family $\U \triangleq (\U_X)_{X \in L^{p}}$ of uncertainty sets is said to satisfy:    

\noindent \textit{- Monotonicity:} if $X \leq Y$ then $\U_X \supseteq \U_Y$.

\noindent \textit{- Order preservation:} if $X \leq Y$ then for any $Y^\prime \in \U_Y$ there exists  $X^\prime \in \mathcal{U}_X$ such that $X^\prime \leq Y^\prime$.

\noindent \textit{- Convexity: } $\U_{\lambda X+ (1-\lambda )Y}\subseteq \lambda \U_{X}+ (1-\lambda)\U_Y$ for any $X,Y \in L^{p}$, $\lambda \in [0,1]$.

\noindent \textit{- Quasi-convexity: } $\U_{\lambda X+ (1-\lambda )Y}\subseteq \U_{X} \cup\U_Y$ for any $X,Y \in L^{p}$, $\lambda \in [0,1]$.

\noindent \textit{- c-quasi-convexity: } for any $X,Y \in L^{p}$, $\lambda \in [0,1]$
\begin{equation}\label{eq:quasi-co}
    \U_{\lambda X+ (1-\lambda )Y}\subseteq \U_{X} \cup \U_Y +L^p_+.
\end{equation}

\noindent \textit{- Continuity from above:} if $(X_n)_{n \in \mathbb{N}}$ is a decreasing sequence in $L^p$ with $X_n \downarrow X$ as $n \to + \infty$, then  $\cup_{n \in \mathbb{N}}\U_{X_n}=\U_X$. 

\noindent \textit{- Solidity:} for any $X \in L^{p}$ it holds that: if $Y \in \U_X$ and $\bar{Y} \geq Y$ (with $\bar{Y} \in L^p$), then also $\bar{Y} \in \U_X$.

\noindent \textit{- Law invariance:} if $X \sim X'$, i.e. $X$ and $X'$ have the same distribution with respect to $P$, then $\U_X= \U_{X'}$.

\noindent {- Cash-invariance:} for any $X \in L^{p}$, $c \in \mathbb{R}$, it holds that: $\U_{X+c}=\U_X +c.$

\end{definition}

For what concerns the above properties of families of uncertainty sets, convexity is the same as in \citet{righi-arxiv}, while monotonicity and order preservation correspond to the analogous axioms in \citet{pesenti-etal}. However, we observe that order preservation is called monotonicity in \citet{righi-arxiv}. Quasi-convexity of the family of uncertainty sets is the ``natural'' generalization of the classical quasi-convexity notion to set-valued maps (where, in our context, we consider the set-valued map which associates to each $X$ an uncertainty set $\U_X$). In this framework, the order relation $\leq$ is replaced by the inclusion operator and supremum corresponds to union of sets. As for c-quasi-convexity, it can be found in the literature on set-valued maps (see the notion of (u3)-type c-quasi-convexity in \citet{Seto-et-al2018} with the cone $C$ being $-L^{p}_+$ in our context). 

\begin{remark}\label{rem:qco-cvx-relation}
    (a) \ We observe that c-quasi-convexity (defined as in \eqref{eq:quasi-co}) is not implied in general by the convexity assumption of \citet{righi-arxiv}. See Example \ref{ex: conv-not qco} below and \citet[Section 4, pag.13]{Seto-et-al2018} for further comments.

    (b) \ Several other types of convexity and quasi-convexity for families of sets could be also introduced (see, e.g., \citet[pp.5--6]{Seto-et-al2018}), and the properties of the corresponding robust risk measures could then be investigated.
\end{remark}

The following example illustrates that in general convexity does imply neither quasi-convexity nor c-quasi-convexity. Examples of (c-)quasi-convex families of uncertainty sets will be provided in Section \ref{sec: examples}.

\begin{example}\label{ex: conv-not qco}
Fix $\varepsilon >0$ and consider $\U$ on $L^{\infty}$ with
\begin{equation*}
\U_X= X+ B_{\varepsilon}, \quad \mbox{ for any } X \in L^{\infty}, 
\end{equation*}
where $B_{\varepsilon}\triangleq \{Z \in L^{\infty}: \Vert Z\Vert_{\infty} \leq \varepsilon \}$.

It follows immediately that the family $\U$ is convex. 
Nevertheless, $\U$ is neither quasi-convex nor c-quasi-convex. 

To show this, let  $X=0$, $Y=10 \varepsilon$, and $\alpha =\frac12$, it holds  $\U_{\alpha X + (1-\alpha)Y}=5 \varepsilon + B_{\varepsilon}$. For  $\bar{Z}=\frac12 \varepsilon$, then, it follows that
 $5 \varepsilon + \bar{Z}= \frac{11}{2} \varepsilon \in \U_{\alpha X + (1-\alpha)Y}$ but it does not belong to $\U_X \cup \U_Y=B_{\varepsilon} \cup (10\varepsilon+B_{\varepsilon})$. This implies that $\U_{\lambda X+ (1-\lambda )Y}\nsubseteq \U_{X} \cup\U_Y$, that is, quasi-convexity of $\U$ fails.
\smallskip

With a slight modification of this example, it can be also shown that convexity of $\U$ does not guarantee c-quasi-convexity.
\end{example}

The following Lemma clarifies the link between the two concepts of quasi-convexity and c-quasi-convexity for families of uncertainty sets.

\begin{lemma} \label{lem: c-qco}
i) Solidity of $\U$ is equivalent to $\U_X=\U_X + L^{p}_+$ for any $X \in L^{p}$.

\noindent ii) If the family $\U$ 
is quasi-convex, then it is also c-quasi-convex.
The converse implication holds under solidity of $\U$.
\end{lemma}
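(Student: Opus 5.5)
The plan is to verify both parts directly from the definitions, using only set inclusions and the fact that $0 \in L^p_+$ and $L^p_+ + L^p_+ = L^p_+$.

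For part i), I will prove the two inclusions separately. The inclusion $\U_X \subseteq \U_X + L^p_+$ always holds, since any $Y \in \U_X$ can be written as $Y = Y + 0$ with $0 \in L^p_+$. Now assume solidity. Any element of $\U_X + L^p_+$ has the form $\bar Y = Y + W$ with $Y \in \U_X$ and $W \geq 0$, so $\bar Y \geq Y$ and $\bar Y \in L^p$; solidity then gives $\bar Y \in \U_X$. Conversely, assume $\U_X = \U_X + L^p_+$. If $Y \in \U_X$ and $\bar Y \geq Y$ with $\bar Y \in L^p$, then $W \triangleq \bar Y - Y \in L^p_+$, so $\bar Y = Y + W \in \U_X + L^p_+ = \U_X$. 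This is exactly solidity.

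For part ii), the first implication is immediate. If $\U$ is quasi-convex, then
\[
\U_{\lambda X + (1-\lambda) Y} \subseteq \U_X \cup \U_Y \subseteq (\U_X \cup \U_Y) + L^p_+,
\]
where the last inclusion again uses $0 \in L^p_+$.

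For the converse under solidity, the key step is the identity
\[
(\U_X \cup \U_Y) + L^p_+ = (\U_X + L^p_+) \cup (\U_Y + L^p_+).
\]
This holds because Minkowski addition distributes over unions: an element $Z + W$ with $Z$ in the union lies in whichever of the two translated sets contains $Z$. Applying part i) to each term, the right-hand side equals $\U_X \cup \U_Y$. Hence c-quasi-convexity in \eqref{eq:quasi-co} reduces to $\U_{\lambda X + (1-\lambda) Y} \subseteq \U_X \cup \U_Y$, which is quasi-convexity.

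I do not expect a real obstacle here. The only points that need care are reading the notation $\U_X \cup \U_Y + L^p_+$ as $(\U_X \cup \U_Y) + L^p_+$, and checking the distributivity of Minkowski sums over unions.
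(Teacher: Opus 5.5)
Your proposal is correct and follows essentially the same argument as the paper: part i) by the two inclusions (using $0\in L^p_+$ and solidity), and part ii) via the identity $(\U_X\cup\U_Y)+L^p_+=(\U_X+L^p_+)\cup(\U_Y+L^p_+)$ combined with part i). You are slightly more explicit than the paper, for instance in noting that $\U_X\subseteq\U_X+L^p_+$ holds without any assumption.
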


\begin{proof}
i) If $\U_X=\U_X + L^{p}_+$ holds for any $X \in L^{p}$, then solidity is straightforward. Conversely, if solidity holds, then the inclusion $\subseteq$ is immediate. Concerning the $\supseteq$ inclusion, if $Z \in \U_X + L^{p}_+$ then $Z \geq \tilde{Z}$ for some $\tilde{Z} \in \U_X$. By solidity of $\U$, it follows that also $Z \in \U_X$.

\noindent ii) The first statement is straightforward. 
The converse implication is due to
\begin{equation*}
\U_X\cup \U_Y+L_+^{p}=(\U_X+L_+^{p})\cup(\U_Y+L_+^{p})= \U_X\cup \U_Y,
\end{equation*}
where the former equality can be checked easily, the latter follows from solidity.
\end{proof}
\bigskip

In the sequel, we will combine the approaches of \citet{righi-arxiv} and \citet{pesenti-etal} (recalled at the end of Section \ref{sec:preliminaries}, equations \eqref{eq: rho robust} and \eqref{eq: largest U}) to robustify a risk measure $\rho$ that is not necessarily cash-additive. 
Also, in the present setting where we consider $L^p$ spaces and we do not necessarily  require cash-additivity, it can be proved (as in \citet{pesenti-etal}, Lemma 3) that the largest family of uncertainty sets inducing $\rhot$ corresponds to:
\begin{equation} \label{eq: largest uncertainty}
\Ut_X= \big\{Z \in L^{\infty}: \rho(Z) \leq \rhot(X) \big\}, \quad \mbox{for any } X \in L^{p}. 
\end{equation}
\smallskip

In the following result, items a) and b) generalize items 2. and 7. of Theorem 2 of \citet{pesenti-etal} to the case of robustification of risk measures $\rho$ that are not necessarily cash-additive, for general families of uncertainty sets  in $L^p$ spaces, while items c), d) and e) are new and deal with quasi-convexity of the robust risk measure. 

\begin{proposition} \label{prop: properties robust-cqco}
Let $\rho$ be a risk measure, $\U$ be a family of uncertainty sets and $\rhot$ the robustification of $\rho$ via $\U$.

\noindent a) If $\U$ is monotone or order preserving then $\rhot$ is monotone.

\noindent b) If $\rho$ is convex and $\U$ is a convex family, then $\rhot$ is convex. 

\noindent c) If $\U$ is c-quasi-convex (or quasi-convex), then $\rhot$ is quasi-convex.

\noindent d) If $\rho$ is quasi-convex and $\U$ is convex, then $\rhot$ is quasi-convex.

\noindent  e) If $\U$ is monotone and continuous from above, then $\rhot$ is continuous from above.

\noindent  f) If $\U$ is law invariant, then $\rhot$ is law invariant.
\end{proposition}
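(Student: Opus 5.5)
We will prove each item directly from the definition $\rhot(X)=\sup_{Z\in\U_X}\rho(Z)$, using only the monotonicity of $\rho$ (which every risk measure has in this paper) plus the extra hypotheses listed in each item. The convention $\sup\emptyset=-\infty$ is kept throughout.

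For a), let $X\le Y$. If $\U$ is monotone, then $\U_Y\subseteq\U_X$, so $\rhot(Y)\le\rhot(X)$ since the supremum over a smaller set is smaller. If $\U$ is order preserving, take any $Y'\in\U_Y$ and pick $X'\in\U_X$ with $X'\le Y'$. Monotonicity of $\rho$ gives $\rho(Y')\le\rho(X')\le\rhot(X)$, and taking the supremum over $Y'$ yields $\rhot(Y)\le\rhot(X)$.

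For b) and d), fix $Z\in\U_{\lambda X+(1-\lambda)Y}$. By convexity of $\U$ we can write $Z=\lambda Z_1+(1-\lambda)Z_2$ with $Z_1\in\U_X$ and $Z_2\in\U_Y$. Convexity of $\rho$ then gives $\rho(Z)\le\lambda\rho(Z_1)+(1-\lambda)\rho(Z_2)\le\lambda\rhot(X)+(1-\lambda)\rhot(Y)$. Under quasi-convexity of $\rho$ we get instead $\rho(Z)\le\max\{\rho(Z_1),\rho(Z_2)\}\le\max\{\rhot(X),\rhot(Y)\}$. Taking the supremum over $Z$ concludes both items. The only care needed is with extended values: when one of $\rhot(X),\rhot(Y)$ is $+\infty$ the bound is trivial, and we adopt the usual conventions otherwise.

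For c), note that by Lemma \ref{lem: c-qco} quasi-convexity implies c-quasi-convexity, so it suffices to treat the c-quasi-convex case. Fix $Z\in\U_{\lambda X+(1-\lambda)Y}$. By \eqref{eq:quasi-co}, $Z=W+K$ with $K\in L^p_+$ and $W\in\U_X\cup\U_Y$. Since $Z\ge W$, monotonicity of $\rho$ gives $\rho(Z)\le\rho(W)\le\max\{\rhot(X),\rhot(Y)\}$. Taking the supremum over $Z$ shows that $\rhot$ is quasi-convex. This is the step where monotonicity of $\rho$ replaces any convexity assumption on $\rho$.

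For e), let $X_n\downarrow X$. By a), monotonicity of $\U$ makes $\rhot(X_n)$ increasing in $n$ and bounded above by $\rhot(X)$. Continuity from above of $\U$ gives $\U_X=\bigcup_n\U_{X_n}$. Hence
\[
\rhot(X)=\sup_{Z\in\bigcup_n\U_{X_n}}\rho(Z)=\sup_n\sup_{Z\in\U_{X_n}}\rho(Z)=\sup_n\rhot(X_n)=\lim_n\rhot(X_n).
\]
The main point to check here is simply the interchange of suprema over a union, which is elementary. The real subtlety in this item is that the needed continuity comes entirely from the set-valued condition, not from $\rho$.

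For f), if $X\sim X'$ then $\U_X=\U_{X'}$ by law invariance of $\U$, so $\rhot(X)=\rhot(X')$ immediately.

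Overall no step is a genuine obstacle. The points needing most attention are the extended-real conventions in b) and d) and the reliance on the monotonicity of $\rho$ in the c-quasi-convex case of c).
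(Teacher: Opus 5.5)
Your proposal is correct and follows essentially the same route as the paper in every item. For a) you use inclusion or order preservation plus monotonicity of $\rho$; for b) and d) the decomposition $Z=\lambda Z_1+(1-\lambda)Z_2$ from convexity of $\U$ with (quasi-)convexity of $\rho$; for c) the reduction via Lemma \ref{lem: c-qco} and monotonicity of $\rho$ applied to $Z=W+K$; and for e) the identity $\U_X=\bigcup_n\U_{X_n}$ combined with monotonicity. The only difference is presentational: you bound $\rho(Z)$ for each fixed $Z$ before taking the supremum, where the paper writes chains of supremum inequalities.
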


\begin{proof}
a) Let $X,Y \in L^p$ with $X \leq Y$ be fixed arbitrarily.

If $\U$ is monotone, then $\U_X \supseteq \U_Y$, implying that
$$
\rhot(X)= \sup_{Z \in \U_X} \rho(Z) \geq \sup_{Z \in \U_Y} \rho(Z)  = \rhot(Y),
$$
that is, monotonicity of $\rhot$. 

If $\mathcal{U}$ is order preserving, then for any $Y^\prime \in \mathcal{U}_Y$ there exists $X^\prime \in \mathcal{U}_X$, such that $X^\prime \leq Y^\prime$. 
Then, by monotonicity of $\rho$, we have $\rho(X^\prime)\geq \rho(Y^\prime)$. 
Taking into account that such an inequality holds for all $Y^\prime \in \mathcal{U}_Y$ and by definition of $\rhot$, we obtain 
$\tilde{\rho}(X) \geq \tilde{\rho}(Y)$. 

\noindent b) Suppose that $\rho$ is a convex risk measure and that $\U$ is convex.
By definition of $\rhot$, it holds that for any $X,Y \in L^{p}$ and $\alpha \in [0,1]$
\begin{align}
\rhot(\alpha X + (1-\alpha)Y) &= \sup_{Z \in \U_{\alpha X + (1-\alpha)Y}} \rho(Z) \notag\\
&\leq \sup_{Z \in (\alpha \U_{X}+(1-\alpha) \U_Y)} \rho(Z) \label{eq: ineq1}\\
&\leq \sup_{Z_1 \in \U_X; Z_2 \in \U_Y} \rho(\alpha Z_1 + (1-\alpha)Z_2) \notag\\
&\leq \sup_{Z_1 \in \U_X; Z_2 \in \U_Y} \big(\alpha \rho(Z_1) + (1-\alpha) \rho(Z_2) \big) \label{eq: ineq2}\\
&= \alpha \sup_{Z_1 \in \U_X} \rho(Z_1) + (1-\alpha) \sup_{Z_2 \in \U_Y}\rho(Z_2) \notag\\
&=\alpha  \rhot(X) + (1-\alpha) \rhot(Y), \notag
\end{align}
where \eqref{eq: ineq1} is due to convexity of $\U$ and \eqref{eq: ineq2} to convexity of $\rho$.

\noindent c) By Lemma \ref{lem: c-qco}- ii), quasi-convexity of $\U$ implies c-quasi-convexity of $\U$. Therefore, it is enough to prove that c-quasi-convexity of $\U$ implies quasi-convexity of $\rhot$. 

From c-quasi-convexity of $\U$, it follows that, for any $X,Y \in L^{p}$ and $\alpha \in [0,1]$,
 \begin{equation*}
 \rhot(\alpha X + (1-\alpha)Y)= \sup_{Z \in \U_{\alpha X + (1-\alpha)Y}} \rho(Z) \leq \sup_{Z \in \U_{X}\cup \U_Y +L^p_+} \rho(Z) .
 \end{equation*}
 If $Z \in \U_{X}\cup \U_Y +L^p_+$ then $Z= Z_{X,Y}+K$ for some $Z_{X,Y} \in \U_{X}\cup \U_Y$ and $ K \in L^p_+$. Then, by monotonicity of $\rho$,
 $$
 \rho(Z) = \rho ( Z_{X,Y}+K) \leq \rho(Z_{X,Y}).
 $$
 The previous arguments imply that
 \begin{equation*}
  \rhot(\alpha X + (1-\alpha)Y) \leq \sup_{Z \in \U_{X}\cup \U_Y +L^p_+} \rho(Z) \leq \sup_{Z \in \U_{X}\cup \U_Y } \rho(Z)  = \max\{ \rhot(X); \rhot(Y) \},
 \end{equation*}
 i.e. quasi-convexity of $\rhot$.

\noindent d) Proceeding similarly as in item b), convexity of $\U$ implies that, for any $X,Y \in L^{p}$ and $\alpha \in [0,1]$,
\begin{align}
\rhot(\alpha X + (1-\alpha)Y) &= \sup_{Z \in \U_{\alpha X + (1-\alpha)Y}} \rho(Z) \notag\\
&\leq \sup_{Z_1 \in \U_X; Z_2 \in \U_Y} \rho(\alpha Z_1 + (1-\alpha)Z_2) \notag\\
&\leq \sup_{Z_1 \in \U_X; Z_2 \in \U_Y} \max\{\rho(Z_1); \rho(Z_2) \} \label{eq: ineq2-b}\\
&= \max\bigg\{\sup_{Z_1 \in \U_X} \rho(Z_1); \sup_{Z_2 \in \U_Y}\rho(Z_2) \bigg\} \notag\\
&= \max\{\rhot(X); \rhot(Y) \}, \notag
\end{align}
where \eqref{eq: ineq2-b} is due to quasi-convexity of $\rho$.

\noindent e) 
Let $(X_n)_{n \in \mathbb{N}}$ be a decreasing sequence of $L^p$ with $X_n \downarrow X$ as $n \to + \infty$. Then
\begin{align}
\rhot(X) &\geq \lim_{n \to + \infty} \rhot(X_n) = \sup_{n \in \mathbb{N}} \rhot(X_n) \label{eq: 999}\\
&=\sup_{n \in \mathbb{N}}\sup_{Z \in \U_{X_n}} \rho(Z) \notag\\
&= \sup_{Z \in \U_{X}} \rho(Z) =\rhot(X), \label{eq: 999-1}
\end{align}
where \eqref{eq: 999} is due to monotonicity of $\U$ --hence, by item a), monotonicity of $\rhot$--, while the first equality in \eqref{eq: 999-1} is implied by continuity from above of $\U$.
Continuity from above of $\rhot$ then follows.

\noindent f) is immediate.
\end{proof}

In other words, Proposition \ref{prop: properties robust-cqco} underlines that robust risk measures can be built at least in two ways: with a general risk measure $\rho$ and a monotone quasi-convex (or c-quasi-convex) family $\U$ of uncertainty sets; or with a quasi-convex risk measure $\rho$ and a monotone convex $\U$. Notice that the two approaches are different since, as already observed in Remark \ref{rem:qco-cvx-relation}, convexity of $\U$ does imply neither quasi-convexity nor c-quasi-convexity. Roughly speaking, the ``sources'' of quasi-convexity of $\rhot$ can be either at the level of the primal risk measure $\rho$ or of the family $\U$.
\medskip

Consider now the largest family $\Ut$ of uncertainty sets inducing a given $\rhot$ as in \eqref{eq: largest U}. Several properties of $\Ut$ have been established in \citet{pesenti-etal} for cash-additive risk measures.

Item a) of the following result generalizes item 2. of Theorem 2 of \citet{pesenti-etal} to the case of robustification of risk measures $\rho$ that are not necessarily cash-additive, while item b) is new.

\begin{proposition} 
Let $\rho$ be a risk measure, $\rhot$ the robustification of $\rho$ and $\Ut$ the largest family of uncertainty sets defined in \eqref{eq: largest U} associated to $\rhot$.

Then $\Ut$ is solid and the following implications hold.

\noindent a) If $\rhot$ is monotone, then $\Ut$ is monotone.

\noindent b) If $\rhot$ is quasi-convex, then $\Ut$ is quasi-convex (hence, also c-quasi-convex).
\end{proposition}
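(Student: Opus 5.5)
The plan is to work throughout with the explicit description \eqref{eq: largest uncertainty} of the largest family, namely $\Ut_X=\{Z\in L^p:\rho(Z)\le\rhot(X)\}$. This is a sublevel set of $\rho$ at the level $\rhot(X)$, and every claim reduces to elementary manipulations of such sublevel sets. The identity itself is taken as given, as the paper states it can be proved as in Lemma 3 of \citet{pesenti-etal}. It does rely on the fact that any family inducing $\rhot$ consists of sets whose elements $Z$ satisfy $\rho(Z)\le\rhot(X)$, and that the sublevel set itself induces $\rhot$, since its supremum of $\rho$ is at most $\rhot(X)$ and it contains the original $\U_X$.

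\textbf{Solidity.} Take $Y\in\Ut_X$ and $\bar Y\ge Y$. Monotonicity of $\rho$ (a standing assumption on risk measures) gives $\rho(\bar Y)\le\rho(Y)\le\rhot(X)$, so $\bar Y\in\Ut_X$.

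\textbf{Item a).} Let $X\le Y$. Monotonicity of $\rhot$ gives $\rhot(Y)\le\rhot(X)$. Hence any $Z$ with $\rho(Z)\le\rhot(Y)$ also satisfies $\rho(Z)\le\rhot(X)$, which is exactly $\Ut_Y\subseteq\Ut_X$.

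\textbf{Item b).} Fix $X,Y$ and $\lambda\in[0,1]$, and let $Z\in\Ut_{\lambda X+(1-\lambda)Y}$. Then
\[
\rho(Z)\le\rhot(\lambda X+(1-\lambda)Y)\le\max\{\rhot(X),\rhot(Y)\},
\]
using quasi-convexity of $\rhot$. Whichever term attains the maximum, say $\rhot(X)$, we get $\rho(Z)\le\rhot(X)$, so $Z\in\Ut_X\subseteq\Ut_X\cup\Ut_Y$. This shows quasi-convexity of $\Ut$. c-quasi-convexity then follows from Lemma~\ref{lem: c-qco}~ii).

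\textbf{Main obstacle.} The only delicate point is making sure the characterization \eqref{eq: largest uncertainty} is available in this non-cash-additive $L^p$ setting (note the paper writes $L^\infty$ there, presumably meaning $L^p$). Once that is granted, all three claims are immediate one-line arguments.
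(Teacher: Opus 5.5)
Your proof is correct and follows the paper's argument essentially line by line. Solidity comes from monotonicity of $\rho$ applied to the sublevel-set description \eqref{eq: largest uncertainty}, and items a) and b) come from comparing the levels $\rhot(Y)\le\rhot(X)$ and $\rhot(\lambda X+(1-\lambda)Y)\le\max\{\rhot(X),\rhot(Y)\}$. Your remark that \eqref{eq: largest uncertainty} should read $L^p$ rather than $L^\infty$ is well taken, and the paper's own proof indeed works with $L^p$.
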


\begin{proof}
\noindent Solidity of $\Ut$ follows immediately from \eqref{eq: largest uncertainty} and from the monotonicity of $\rho$.

\noindent a) can be proved similarly to Theorem 2, item 2., of \citet{pesenti-etal}.
Assume, indeed, that $\rhot$ is monotone. If $X \leq Y$, then $\rhot(Y) \leq \rhot(X)$, hence
$$
\Ut_Y=\{ Z \in L^{p}: \rho(Z) \leq \rhot(Y)\} \subseteq \{ Z \in L^{p}: \rho(Z) \leq \rhot(X)\}=\Ut_X.
$$

\noindent b) Assume that $\rhot$ is quasi-convex.
 Let $X,Y \in L^{p}$ and $\alpha \in [0,1]$ be fixed arbitrarily. If $Z \in \Ut_{\alpha X + (1-\alpha)Y}$, then 
 $$
 \rho(Z) \leq \rhot(\alpha X + (1-\alpha)Y) \leq \max\{\rhot(X); \rhot(Y)\},
 $$
 where the former inequality comes from \eqref{eq: largest uncertainty}, while the latter from quasi-convexity of $\rhot$. Again by \eqref{eq: largest uncertainty}, it follows that $Z \in \Ut_X \cup \Ut_Y$, so $\Ut$ is quasi-convex.
\end{proof}

\section{Dual representation of robust quasi-convex risk measures}
\label{sec:dual_representation}

In this section, we will provide the dual representation of robust risk measures in the two approaches discussed in Section \ref{sec: robust-uncertainty}, both building robust quasi-convex risk measures.
As in \citet{righi-arxiv}, a key ingredient of the dual representation is the support function of an uncertainty set, defined as
\begin{equation} \label{eq: varphi_Q}
\varphi_Q(X) \triangleq \sup_{Z \in \U_X} \E_Q[-Z], \quad \mbox{ for any } X \in L^{p},
\end{equation}
once a family $\U$ of uncertainty sets is fixed.
\smallskip

\textit{In the sequel, families $\U$ of uncertainty sets will be always supposed to satisfy monotonicity.}

\subsection{First approach: quasi-convex $\rho$, convex $\U$}

We will now prove a dual representation of robust quasi-convex risk measures in the first approach of robustification, that is, for a quasi-convex $\rho$ and a convex family $\U$ of uncertainty sets. 
\smallskip

\begin{proposition} 
\label{prop: dual-repr-gen}
If $\U$ is a convex family of uncertainty sets and $\rho: L^{p} \to [-\infty, + \infty]$, with $p \in [1,+\infty]$, is a quasi-convex and continuous from above risk measure, then the associated robust risk measure $\rhot$ is quasi-convex.
Furthermore, it has the following dual representation:
\begin{equation*} 
	\rhot(X) = \sup_{Q \in \PP} \sup_{Z \in \U_X} R_{\rho} \left( \E_Q[-Z], Q \right), \quad \mbox{ for any } X \in L^{p},
\end{equation*}
where $R_{\rho}$ is the penalty-type functional in \eqref{eq: R}  of $\rho$.
\end{proposition}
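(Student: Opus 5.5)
Quasi-convexity of $\rhot$ needs no new argument: it is exactly Proposition \ref{prop: properties robust-cqco}, item d). Monotonicity of $\rhot$, which we will not need for the representation itself, follows from item a) under the standing monotonicity assumption on $\U$. The work therefore lies in the dual representation, and the plan is to plug the quasi-convex dual representation \eqref{eq: repr rho} of the \emph{initial} risk measure $\rho$ into the definition \eqref{eq: rho robust} of $\rhot$ and then interchange suprema.

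\textbf{Step 1 (dual representation of $\rho$).} Since $\rho$ is monotone, quasi-convex and continuous from above on $L^p$, the results of \citet{CVMMM11,frittelli-maggis} recalled in Section \ref{sec:preliminaries} give
\[
\rho(Z)=\sup_{Q\in\PP} R_\rho(\E_Q[-Z],Q), \quad \mbox{ for any } Z\in L^p,
\]
with $R_\rho$ as in \eqref{eq: R}. We apply this pointwise to each $Z\in\U_X\subseteq L^p$.

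\textbf{Step 2 (substitution and interchange).} For fixed $X\in L^p$,
\[
\rhot(X)=\sup_{Z\in\U_X}\rho(Z)=\sup_{Z\in\U_X}\sup_{Q\in\PP}R_\rho(\E_Q[-Z],Q)=\sup_{Q\in\PP}\sup_{Z\in\U_X}R_\rho(\E_Q[-Z],Q).
\]
The last equality holds because iterated suprema over a product index set $\U_X\times\PP$ always commute in $[-\infty,+\infty]$, with no compactness or minimax argument needed. This is precisely the claimed formula.

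\textbf{Step 3 (conventions and degenerate cases).} The only step needing care, and hence the main obstacle, is checking that extended-valued quantities do not break the chain of equalities. Two cases arise:
\begin{itemize}
\item If $\U_X=\emptyset$, both sides equal $-\infty$ under the usual convention $\sup\emptyset=-\infty$.
\item If $\E_Q[-Z]$ is undefined for some $Z\in L^p$ and $Q\in\PP$ (possible when $p<\infty$), the same convention adopted in \eqref{eq: repr rho} for $\rho$ is inherited verbatim. One may restrict to those $Q$ with $\frac{dQ}{dP}\in L^{q}$, as in \citet{frittelli-maggis}, since the representation of $\rho$ is taken over exactly the same set of measures.
\end{itemize}
Because the representation is applied separately to each $Z$, convexity of $\U$ is used only for the quasi-convexity claim, not for the formula.

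Optionally, one can also rewrite the inner supremum. Since $R_\rho(\cdot,Q)$ is monotone increasing in $t$, we have $\sup_{Z\in\U_X}R_\rho(\E_Q[-Z],Q)\le R_\rho(\varphi_Q(X),Q)$, with $\varphi_Q$ as in \eqref{eq: varphi_Q}. Equality would need left-continuity of $R_\rho$ in $t$, which is why the statement keeps the inner supremum over $Z$.
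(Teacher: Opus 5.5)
Your proposal is correct and follows the paper's proof essentially step for step. Quasi-convexity and monotonicity come from items d) and a) of Proposition~\ref{prop: properties robust-cqco}; the quasi-convex dual representation of $\rho$ is substituted into the definition of $\rhot$, and the two suprema are interchanged. Your extra remarks on the degenerate cases, and on why the inner supremum over $Z$ cannot in general be pushed inside $R_\rho$ without left-continuity in $t$ (which the paper obtains only later, from cash-subadditivity, in Theorem~\ref{thm: dual represen-qco rho-Lp}), are accurate additions rather than departures.
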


\begin{proof}
Monotonicity and quasi-convexity of $\rhot$ follow, respectively, from items a) and d) of Proposition \ref{prop: properties robust-cqco}.

From Theorem 2.9, Corollary 2.14, Lemma 3.2 of \citet{frittelli-maggis} (see also Propositions 4.3-4.4 and Theorem 3.1 of \citet{CVMMM11} on $L^{\infty}$), $\rho$ can be represented as 
\begin{equation*} 
\rho(X)= \sup_{Q \in \PP} R_\rho \left( \E_Q[-X], Q \right), \quad \mbox{ for any } X \in L^{p},
\end{equation*}
where $R_{\rho}(t,Q)$ is
monotone increasing and quasi-concave in $t \in \mathbb{R}$, with $\inf_{t \in \mathbb{R}} R(t,Q)=\inf_{t \in \mathbb{R}} R(t,Q')$ for any $Q, Q' \in \mathcal{P}$.

By the previous arguments and by definition of $\rhot$, it follows that, for any $X \in L^p$,
\begin{align*}
\rhot(X)&=\sup_{Z \in \U_X} \rho(Z) =\sup_{Z \in \U_X} \sup_{Q \in \PP} R_\rho \left( \E_Q[-Z], Q \right) \\
&=\sup_{Q \in \PP} \sup_{Z \in \U_X}  R_\rho \left( \E_Q[-Z], Q \right).
\end{align*}
\end{proof}

Under stronger assumptions on $\rho$, we will provide a ``more explicit'' dual representation of the robust quasi-convex risk measure. 
With this in mind, we will start with the particular case of risk measures $\rho$ corresponding to certainty equivalents on $L^{\infty}$, i.e.
\begin{equation*} 
     \rho_\ell(X) = \ell^{-1} \left( \E_P\left[\ell(-X)\right] \right), \quad \mbox{ for any } X \in L^{\infty},
\end{equation*}
where $\ell: \mathbb{R} \to \mathbb{R}$ is a strictly increasing convex function (also called loss function) and $\ell^{-1}$ denotes its inverse function.
By Proposition 5.3 in \citet{CVMMM11}, $\rho_\ell$ is quasi-convex and admits the following robust representation:	
\begin{equation*} 
    \rho_\ell(X) = \sup_{Q \in \PP} \, R_\ell \left( \E_Q[-X], Q \right), \quad \mbox{ for any } X \in L^{\infty},
\end{equation*}
where $R_\ell(t, Q)$ is a penalty-type function of the following form:
\begin{equation} \label{eq: penalty_function}
	R_\ell(t, Q) \triangleq \ell^{-1} \left( \max_{x \geq 0} \left\{ x t - \E_P \left[ \ell^* \left( x \, \frac{dQ}{dP} \right) \right] \right\} \right), \quad \mbox{ for any } (t,Q) \in (\mathbb{R},\mathcal{P}),
\end{equation}
with $\ell^*$ denoting the convex conjugate of $\ell$.
\smallskip

Now, consider a robustified version of the risk measure $\rho_\ell$, where uncertainty affects the underlying outcomes. Namely,
\begin{equation} \label{eq: robustified_rho}
	\rhot_\ell(X) \triangleq \sup_{Z \in \U_X} \rho_\ell(Z)= \sup_{Z \in \U_X} \sup_{Q \in \PP} \, R_\ell \left( \E_Q[-Z], Q \right), \quad \mbox{ for any } X \in L^{\infty}.
\end{equation}

\begin{proposition}[Robustified certainty equivalent] 
\label{prop: robust cce represent}
Let $\U$ be a convex family of uncertainty sets on $L^{\infty}$. Then the  robustified certainty equivalent $\rhot_\ell$ is quasi-convex and has the following dual representation:
\begin{equation*} 
	\rhot_\ell(X) = \sup_{Q \in \PP} R_\ell \left( \varphi_Q(X), Q \right), \quad \mbox{ for any } X \in L^{\infty}.
\end{equation*}
\end{proposition}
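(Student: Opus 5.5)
Quasi-convexity of $\rhot_\ell$ follows directly from the earlier results. By Proposition 5.3 in \citet{CVMMM11}, $\rho_\ell$ is a quasi-convex, monotone risk measure on $L^\infty$. Recall that the families $\U$ are assumed monotone throughout this section. Items a) and d) of Proposition \ref{prop: properties robust-cqco} then give monotonicity and quasi-convexity of $\rhot_\ell$.

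For the representation, I would start from \eqref{eq: robustified_rho} and exchange the two suprema, as in the proof of Proposition \ref{prop: dual-repr-gen}. This gives
\begin{equation*}
\rhot_\ell(X)=\sup_{Q\in\PP}\sup_{Z\in\U_X} R_\ell\left(\E_Q[-Z],Q\right).
\end{equation*}
It then suffices to show, for each fixed $Q$, that $\sup_{Z\in\U_X} R_\ell(\E_Q[-Z],Q)=R_\ell(\varphi_Q(X),Q)$. Here $\varphi_Q(X)=\sup_{Z\in\U_X}\E_Q[-Z]$ is the support function \eqref{eq: varphi_Q}. The inequality ``$\le$'' is immediate, because $R_\ell(\cdot,Q)$ is monotone increasing in $t$. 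To see this directly from \eqref{eq: penalty_function}: the inner map $g_Q(t)\triangleq\sup_{x\ge0}\{xt-\E_P[\ell^*(x\,dQ/dP)]\}$ is a supremum of functions that are nondecreasing in $t$ (since $x\ge0$), and $\ell^{-1}$ is increasing.

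The inequality ``$\ge$'' is the main step. It amounts to left-continuity of $t\mapsto R_\ell(t,Q)$ at $t^*=\varphi_Q(X)$, or the value $+\infty$ when $t^*=+\infty$. Take $Z_n\in\U_X$ with $t_n\triangleq\E_Q[-Z_n]\uparrow t^*$; monotonicity of $R_\ell$ lets us assume the sequence is increasing. Then $\sup_{Z\in\U_X}R_\ell(\E_Q[-Z],Q)\ge\lim_n R_\ell(t_n,Q)$, and we need
\begin{equation*}
\lim_n R_\ell(t_n,Q)=R_\ell(t^*,Q).
\end{equation*}

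I would obtain this from the structure of $g_Q$. It is a supremum of affine functions of $t$, hence convex and lower semicontinuous with values in $(-\infty,+\infty]$. A nondecreasing lower semicontinuous function is left-continuous: $\liminf_n g_Q(t_n)\ge g_Q(t^*)$ by lower semicontinuity, and the reverse bound holds by monotonicity. Finally, $\ell^{-1}$ is continuous and increasing on the range of $\ell$, extended by its limits at the endpoints, so left-continuity transfers to $R_\ell(\cdot,Q)$.

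The delicate points are the extended-valued cases. These are $t^*=+\infty$, where we need $g_Q(t_n)\to+\infty$, and $g_Q$ leaving the range of $\ell$, where $\ell^{-1}$ must be interpreted via its limits. I would handle both by the same monotone-limit argument, using the conventions implicit in \eqref{eq: penalty_function}.

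Convexity of $\U$ is not used in the representation step itself, only for quasi-convexity. Alternatively, if $\sup_{\U_X}\E_Q[-Z]$ is attained, the ``$\ge$'' inequality is trivial.
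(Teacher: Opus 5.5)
Your proposal is correct and follows essentially the paper's route. After exchanging the suprema, the paper pushes $\sup_{Z\in\U_X}$ through $\ell^{-1}$ using its continuity and monotonicity, and then through $\sup_{x\ge 0}\{x t-\E_P[\ell^*(x\,dQ/dP)]\}$ by interchanging suprema (using $x\ge 0$); this is exactly the mechanism behind your left-continuity argument for the nondecreasing, lower semicontinuous $g_Q$, and you are somewhat more careful than the paper about the domain of $\ell^{-1}$ and the case $\varphi_Q(X)=+\infty$.
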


\begin{proof}
Monotonicity  and quasi-convexity of $\rhot_\ell$ follow directly from quasi-convexity of $\rho_\ell$ and from Proposition \ref{prop: properties robust-cqco}, items a) and d).

We are now going to prove the dual representation. By interchanging the suprema in \eqref{eq: robustified_rho} and applying the explicit formulation of $R_\ell$ in \eqref{eq: penalty_function}, we obtain that, for any $X \in L^{\infty}$,
\begin{align}
	\rhot_\ell(X) &=  \sup_{Q \in \PP} \sup_{Z \in \U_X} R_\ell \left( \E_Q[-Z], Q \right) \notag\\
   &= \sup_{Q \in \PP} \sup_{Z \in \U_X} \ell^{-1} \left( \sup_{x \geq 0} \left\{ x \, \E_Q[-Z] - \E_P \left[ \ell^* \left( x \, \frac{dQ}{dP} \right) \right] \right\} \right) \notag\\
    &= \sup_{Q \in \PP} \ell^{-1} \left( \sup_{x \geq 0} \left\{ x \cdot \sup_{Z \in \U_X} \E_Q[-Z] - \E_P \left[ \ell^* \left( x \, \frac{dQ}{dP} \right) \right] \right\} \right) \label{eq: eq-cont}\\
    &= \sup_{Q \in \PP} \ell^{-1} \left( \sup_{x \geq 0} \left\{ x \cdot \varphi_Q(X) - \E_P \left[ \ell^* \left( x \, \frac{dQ}{dP} \right) \right] \right\} \right) \notag \\
    &= \sup_{Q \in \PP} R_\ell \left( \varphi_Q(X), Q \right), \label{eq: robust cce thesis}
\end{align}
where equality \eqref{eq: eq-cont} is due to continuity of the function $\ell^{-1}$ (as a consequence of its concavity on $\mathbb{R}$), while \eqref{eq: robust cce thesis} follows from \eqref{eq: penalty_function} and \eqref{eq: varphi_Q}.
\end{proof}

In the same spirit of Proposition \ref{prop: robust cce represent}, 
the following result provides a dual representation of robust risk measures for a quasi-convex cash-subadditive $\rho$ and a convex family $\U$ in general spaces $L^p$, with $p \in [1,+ \infty]$. 

\begin{theorem} \label{thm: dual represen-qco rho-Lp}
If $\U$ is a convex family of uncertainty sets on $L^p$ and $\rho: L^p \to [-\infty, + \infty]$ is a quasi-convex, cash-subadditive and continuous from above risk measure, then the associated robust risk measure $\rhot$ is quasi-convex and has the following dual representation:
\begin{equation*} 
 	\rhot(X) = \sup_{Q \in \PP} R_{\rho} \left( \varphi_Q(X), Q \right), \quad \mbox{ for any } X \in L^p.
\end{equation*}
\end{theorem}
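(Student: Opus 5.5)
The plan has three steps: get quasi-convexity for free from earlier results, start from the representation of Proposition \ref{prop: dual-repr-gen}, and then move the supremum over $Z$ inside $R_\rho$ using monotonicity and left-continuity of $R_\rho$ in $t$.

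\emph{Step 1 (monotonicity and quasi-convexity).} Monotonicity and quasi-convexity of $\rhot$ follow directly from items a) and d) of Proposition \ref{prop: properties robust-cqco}, since $\U$ is monotone and convex and $\rho$ is quasi-convex.

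\emph{Step 2 (starting representation).} Proposition \ref{prop: dual-repr-gen} already gives
\[
\rhot(X)=\sup_{Q\in\PP}\sup_{Z\in\U_X}R_\rho(\E_Q[-Z],Q).
\]
So it suffices to show, for each fixed $Q\in\PP$,
\[
\sup_{Z\in\U_X}R_\rho(\E_Q[-Z],Q)=R_\rho\Big(\sup_{Z\in\U_X}\E_Q[-Z],Q\Big)=R_\rho(\varphi_Q(X),Q).
\]
The inequality ``$\le$'' is immediate, because $R_\rho(\cdot,Q)$ is monotone increasing in $t$ and $\E_Q[-Z]\le\varphi_Q(X)$ for every $Z\in\U_X$.

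\emph{Step 3 (reverse inequality via left-continuity).} For ``$\ge$'', pick $Z_n\in\U_X$ with $t_n\triangleq\E_Q[-Z_n]\uparrow\varphi_Q(X)$. If $\varphi_Q(X)$ is attained there is nothing to prove. Otherwise $t_n<\varphi_Q(X)$, and I need left-continuity of $t\mapsto R_\rho(t,Q)$. This is where cash-subadditivity enters.

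For $m\ge0$ and any $Y$ with $\E_Q[-Y]=t$, set $Y'=Y-m$. Then $\E_Q[-Y']=t+m$, and cash-subadditivity gives $\rho(Y)=\rho(Y'+m)\le\rho(Y')-m$. Taking infima yields
\[
R_\rho(t,Q)\le R_\rho(t+m,Q)-m, \qquad\text{i.e.}\qquad R_\rho(t+m,Q)\ge R_\rho(t,Q)+m.
\]
This shows that $t\mapsto R_\rho(t,Q)-t$ is nondecreasing. That alone is not continuity, so the full argument has two parts:
\begin{itemize}
\item use the reverse bound $\rho(Y-m)\le\rho(Y)+$(something) coming from continuity from above of $\rho$, or
\item more plausibly, invoke the fact from \citet{frittelli-maggis} (Lemma 3.2 and the related results) that, under continuity from above, the representation may be taken with the left-continuous version $R^-(t,Q)=\sup_{s<t}R_\rho(s,Q)$ without changing $\rho$.
\end{itemize}
If I pass to $R^-$, left-continuity is built in and Step 3 closes: $R^-(\varphi_Q(X),Q)=\lim_n R^-(t_n,Q)\le\sup_{Z\in\U_X}R^-(\E_Q[-Z],Q)$. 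Cash-subadditivity is then what ensures $R_\rho$ and $R^-$ agree (or differ harmlessly). Indeed, $R_\rho(t,Q)-t$ nondecreasing combined with the representation lets one compare $R_\rho(t,Q)$ with $\rho(Y)$ for $Y$ with $\E_Q[-Y]$ slightly less than $t$, by shifting cash.

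\textbf{Main obstacle.} I expect the hardest step to be this left-continuity of $R_\rho(\cdot,Q)$ at $\varphi_Q(X)$, i.e.\ justifying the interchange of $\sup_Z$ with $R_\rho$. This is the analogue of continuity of $\ell^{-1}$ used in Proposition \ref{prop: robust cce represent}. I would try first to argue directly from the definition \eqref{eq: R} together with cash-subadditivity and continuity from above of $\rho$. If that fails, I would fall back on the left-continuous modification from \citet{frittelli-maggis}, which leaves the representation of $\rho$ unchanged. The cases $\varphi_Q(X)=+\infty$ and $R_\rho=\pm\infty$ should be handled by the same monotone-limit argument.
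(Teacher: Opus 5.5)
Your Steps 1 and 2, and the ``$\le$'' half of Step 3, coincide with the paper's proof. There is a genuine gap, however: left-continuity of $R_\rho(\cdot,Q)$ is never established. The bound you extract, $R_\rho(t+m,Q)\ge R_\rho(t,Q)+m$, controls increments from below and so allows upward jumps; it cannot give left-continuity. Neither fallback closes this.
\begin{enumerate}
\item Continuity from above of $\rho$ does not give left-continuity of $R_\rho$.
\item Passing to $R^-(t,Q)=\sup_{s<t}R_\rho(s,Q)$ yields at best a representation with $R^-$. Identifying $R^-(\varphi_Q(X),Q)$ with $R_\rho(\varphi_Q(X),Q)$ at a non-attained $\varphi_Q(X)$ is exactly the missing left-continuity, so ``agree or differ harmlessly'' is circular.
\end{enumerate}

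With only your inequality, which is what the axiom as displayed in Section 2 gives, the statement is in fact false. Take $\Omega=\{\omega_1,\omega_2\}$ with $P$ uniform, $S=\{z\in\mathbb{R}^2: z_1>0,\ z_1z_2\ge 1\}$, $\rho_S(Y)=\min\{b: Y+b\in S\}$, $\rho=\rho_S+\mathbf{1}_{\{\rho_S>0\}}$ and $\U_X=X+S$. Then:
\begin{enumerate}
\item $\U$ is monotone and convex;
\item $\rho$ is monotone, quasi-convex, continuous from above, and satisfies $\rho(X+m)\le\rho(X)-m$ for $m\ge 0$;
\item $\rhot(0)=0$;
\item for $Q=\delta_{\omega_1}$, $\varphi_Q(0)=0$ is not attained, $R_\rho(0,Q)=1$, and $R^-(0,Q)=0$.
\end{enumerate}

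The missing ingredient is the opposite bound, the non-expansivity $R_\rho(t+m,Q)\le R_\rho(t,Q)+m$ for $m\ge0$. This is what the paper uses, ``as in the proof of Theorem 3.1 of \citet{CVMMM11}''. It follows from exactly your shift argument, but with cash-subadditivity in the orientation of \citet{ELK-R} and \citet{CVMMM11}: $\rho(X+m)\ge\rho(X)-m$ for $m\ge 0$, i.e.\ adding cash lowers risk by at most $m$. This gives $\rho(Y-m)\le\rho(Y)+m$, which is precisely the ``reverse bound'' in your first bullet. It comes from cash-subadditivity, not from continuity from above.

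The axiom displayed in Section 2 has this inequality reversed relative to those sources, which is why your derivation produced the wrong direction. The paper's own discount-factor example satisfies the standard orientation, not the displayed one.

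With non-expansivity and monotonicity, $0\le R_\rho(t,Q)-R_\rho(t_n,Q)\le t-t_n$ for $t_n\uparrow t$. Hence $R_\rho(\cdot,Q)$ is left-continuous, and your Step 3 closes with $R_\rho$ itself, without $R^-$.
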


\begin{proof}
Monotonicity and quasi-convexity of $\rhot$ follow from Proposition \ref{prop: properties robust-cqco}, items a) and d).

As in the proof of Proposition \ref{prop: dual-repr-gen}, $\rho$ can be represented as 
\begin{equation} \label{eq: dual rho-Lp}
\rho(X)= \sup_{Q \in \PP} R_\rho \left( \E_Q[-X], Q \right), \quad \mbox{ for any } X \in L^p,
\end{equation}
with $R_{\rho}$ being
monotone increasing and quasi-concave in $t$, and satisfying $\inf_{t \in \mathbb{R}} R_{\rho}(t,Q)=\inf_{t \in \mathbb{R}} R_{\rho}(t,Q')$ for any $Q, Q' \in \mathcal{P}$.

Moreover, cash-subadditivity of $\rho$ implies that $R_{\rho}$ is also non-expansive, i.e., $R_{\rho}(t',Q) \leq R_{\rho}(t,Q) + |t-t'|$ for any $t,t' \in \mathbb{R}$, $Q \in \PP$. This can be proved (on $L^p$) exactly as in the proof of Theorem 3.1 of \citet{CVMMM11}.

From the non-expansivity of $R_\rho$ it follows that $R_{\rho}(\cdot, Q)$ is left-continuous. Indeed, for any arbitrary $t \in \mathbb{R}$ and sequence $(t_n)_{n \in \mathbb{N}}$ with $t_n \uparrow t$ as $n \to + \infty$, increasing monotonicity and non-expansivity of $R_\rho(\cdot,Q)$ guarantee that
$$
0 \leq R_{\rho}(t,Q) - R_{\rho}(t_n,Q) \leq |t-t_n|.
$$
It then follows that $\lim_{n \to + \infty} R_\rho(t_n,Q)=R_\rho(t,Q)$ for any $Q \in \PP$, hence the left-continuity of $R_{\rho}(\cdot, Q)$ holds.

By \eqref{eq: dual rho-Lp}, $\rhot$ can be rewritten, for any $X \in L^{p}$, as:
\begin{align}
\rhot(X)&=\sup_{Z \in \U_X} \rho(Z) =\sup_{Z \in \U_X} \sup_{Q \in \PP} R_\rho \left( \E_Q[-Z], Q \right) \notag \\ %
&= \sup_{Q \in \PP} \sup_{Z \in \U_X} R_\rho \left( \E_Q[-Z], Q \right) \notag \\
&= \sup_{Q \in \PP}  R_\rho \left(\sup_{Z \in \U_X} \E_Q[-Z], Q \right) \label{eq: left-cont-Lp} \\
&= \sup_{Q \in \PP}  R_\rho \left(\varphi_Q(X), Q \right), \notag
\end{align}
where equality in \eqref{eq: left-cont-Lp} is due to increasing monotonicity and left-continuity of $R_{\rho}(\cdot, Q)$.
\end{proof}

Here below, we show that, when dealing with convex and cash-additive risk measures and with cash-invariant uncertainty sets, the dual representation provided in the previous result reduces to Theorem 1 of \citet{righi-arxiv}.

\begin{corollary} 
\label{cor: dual corollary-Lp}
If $\U$ is a convex, cash-invariant and continuous from above family of uncertainty sets, and $\rho: L^p \to [-\infty,+\infty]$ is a convex, cash-additive and continuous from above risk measure, then the associated $\rhot$ is convex and cash-additive and has the following dual representation:
\begin{equation} \label{eq: dual repres robut convex-Lp}
	\rhot(X) = \sup_{\widetilde{Q} \in \PP} \Big\{\E_{\widetilde{Q}}[-X] - \inf_{Q \in \PP} \{ c_{\varphi_Q}(\widetilde{Q}) +c
_{\rho}(Q)\} \Big\}, \quad \mbox{ for any } X \in L^p, 
\end{equation}
where $c_\rho$ and $c_{\varphi_Q}$ denote the minimal penalty function of $\rho$ and $\varphi_Q$, respectively.
\end{corollary}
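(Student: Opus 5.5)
We plan to specialize Theorem \ref{thm: dual represen-qco rho-Lp} and then rewrite the resulting penalty-type expression via the convex-duality structure of the support functions $\varphi_Q$. Convexity of $\rhot$ comes from Proposition \ref{prop: properties robust-cqco} b), and monotonicity from item a), since $\U$ is monotone throughout this section. Cash-additivity is checked directly from cash-invariance of $\U$ and cash-additivity of $\rho$:
\[
\rhot(X+c)=\sup_{Z\in\U_X}\rho(Z+c)=\rhot(X)-c .
\]
Continuity from above of $\rhot$ follows from Proposition \ref{prop: properties robust-cqco} e).

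Since a convex cash-additive $\rho$ is quasi-convex and cash-subadditive, Theorem \ref{thm: dual represen-qco rho-Lp} applies. Using $R_\rho(t,Q)=t-c_\rho(Q)$, it gives
\[
\rhot(X)=\sup_{Q\in\PP}\{\varphi_Q(X)-c_\rho(Q)\}.
\]
Next we study $\varphi_Q$ for fixed $Q$. It inherits the relevant properties from those of $\U$:
\begin{itemize}
\item monotonicity, from monotonicity of $\U$;
\item convexity, since by convexity of $\U$, $\sup_{\U_{\lambda X+(1-\lambda)Y}}\E_Q[-Z]\le\lambda\varphi_Q(X)+(1-\lambda)\varphi_Q(Y)$;
\item cash-additivity, from cash-invariance: $\varphi_Q(X+c)=\varphi_Q(X)-c$;
\item continuity from above, from continuity from above and monotonicity of $\U$, exactly as in Proposition \ref{prop: properties robust-cqco} e).
\end{itemize}
Hence each $\varphi_Q$ is a convex cash-additive risk measure, continuous from above (we may restrict to $Q$ with $\varphi_Q$ finite, or note that proper-valuedness is handled as in the classical theory). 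The standard dual representation then gives
\[
\varphi_Q(X)=\sup_{\widetilde Q\in\PP}\{\E_{\widetilde Q}[-X]-c_{\varphi_Q}(\widetilde Q)\},
\]
with $c_{\varphi_Q}$ the minimal penalty as in \eqref{eq: c-rho}.

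Substituting and interchanging the two suprema gives
\[
\rhot(X)=\sup_{\widetilde Q}\sup_{Q}\{\E_{\widetilde Q}[-X]-c_{\varphi_Q}(\widetilde Q)-c_\rho(Q)\}
=\sup_{\widetilde Q}\Big\{\E_{\widetilde Q}[-X]-\inf_{Q}\{c_{\varphi_Q}(\widetilde Q)+c_\rho(Q)\}\Big\},
\]
which is \eqref{eq: dual repres robut convex-Lp}.

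The main obstacle is justifying the dual representation of $\varphi_Q$ on $L^p$. This requires that continuity from above (or lower semicontinuity in the chosen topology) transfers from $\U$ to $\varphi_Q$, and that $\varphi_Q$ is proper, since it could take the value $+\infty$ for some $Q$. For pairs $(Q,\widetilde Q)$ with infinite values we will adopt the usual convention $+\infty-\infty=-\infty$ inside the supremum, so that these terms do not affect the identity. We will note that the resulting penalty $\inf_Q\{c_{\varphi_Q}+c_\rho(Q)\}$ need not be minimal, only a valid penalty.
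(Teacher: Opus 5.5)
Your proposal is correct and follows the paper's proof essentially step by step. You specialize Theorem \ref{thm: dual represen-qco rho-Lp} with $R_\rho(t,Q)=t-c_\rho(Q)$ to get $\rhot(X)=\sup_{Q\in\PP}\{\varphi_Q(X)-c_\rho(Q)\}$, show each $\varphi_Q$ is a monotone, convex, cash-additive, continuous-from-above risk measure, represent it via $c_{\varphi_Q}$, and interchange the suprema; the only difference is that you check convexity and cash-additivity of $\varphi_Q$ (and cash-additivity of $\rhot$) directly from convexity and cash-invariance of $\U$, where the paper cites Lemma~1 of \citet{righi-arxiv} or says it is easily checked.
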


\begin{proof}
It can be easily checked that $\rhot$ is a convex cash-additive risk measure.  

It remains to prove \eqref{eq: dual repres robut convex-Lp}.
It is well-known (see \citet{CVMMM11} and \citet{frittelli-maggis}) that if $\rho$ is a convex cash-additive risk measure, then 
\begin{equation} \label{eq: K-convex}
R_\rho(t,Q)=t-c_{\rho}(Q), \quad \mbox{ for any } (t,Q) \in (\mathbb{R},\PP).
\end{equation}
By Theorem \ref{thm: dual represen-qco rho-Lp} and by \eqref{eq: K-convex}, $\rhot$ becomes
\begin{equation} 
\label{eq: rhot-000}
	\rhot(X) = \sup_{Q \in \PP} \{ \varphi_Q(X) - c_{\rho}(Q)\}, \quad \mbox{ for any } X \in L^{p}.
\end{equation}

We prove now that $\varphi_Q$ is a convex cash-additive and continuous from above risk measure.

\noindent \textit{Monotonicity.} If $X \leq Y$, then $\U_X \supseteq \U_Y$ (by monotonicity of $\U$). It is straightforward to check that $\varphi_Q(X) \geq \varphi_Q(Y)$, i.e. decreasing monotonicity of $\varphi_Q$.

\noindent \textit{Continuity from above.} Let $(X_n)_{n \in \mathbb{N}}$ be a decreasing sequence with $X_n \downarrow X$ as $n \to + \infty$. Then
\begin{align*}
\varphi_Q(X) \geq \sup_{n \in \mathbb{N}}\varphi_Q(X_n) &=\sup_{n \in \mathbb{N}}\sup_{Z \in \U_{X_n}} \E_Q[-Z] \notag\\
&=\sup_{Z \in \cup_n\U_{X_n}} \E_Q[-Z] \notag \\
& = \sup_{Z \in \U_{X}} \E_Q[-Z]= \varphi_Q(X), 
\end{align*}
where the first equality in the last line is due to continuity from above of $\mathcal{U}$. Continuity from above of $\varphi_Q$ then follows.

\noindent \textit{Convexity} and \textit{cash-additivity} of $\varphi_Q$ follow from Lemma 1 of \citet{righi-arxiv}.\smallskip

By the previous arguments, $\varphi_Q$ can be represented as
\begin{equation}
\label{eq: rhot-0002}
\varphi_Q(X)= \sup_{\widetilde{Q} \in \PP} \big\{\E_{\widetilde{Q}}[-X] - c_{\varphi_Q}(\widetilde{Q}) \big\}, \quad \mbox{ for any } X \in L^{p}.
\end{equation}
From \eqref{eq: rhot-000} and \eqref{eq: rhot-0002}, it follows that, for any $X \in L^{p}$, 
\begin{align*}
\rhot(X) &= \sup_{Q \in \PP} \{ \varphi_Q(X) - c_{\rho}(Q)\} \\
&= \sup_{Q \in \PP} \bigg\{ \sup_{\widetilde{Q} \in \PP} \big\{\E_{\widetilde{Q}}[-X] - c_{\varphi_Q}(\widetilde{Q}) \big\} - c_{\rho}(Q) \bigg\} \\
&= \sup_{Q \in \PP} \sup_{\widetilde{Q} \in \PP} \big\{\E_{\widetilde{Q}}[-X] - c_{\varphi_Q}(\widetilde{Q})  - c_{\rho}(Q) \big\} \\
&=  \sup_{\widetilde{Q} \in \PP} \Big\{\E_{\widetilde{Q}}[-X] - \inf_{Q \in \PP} \{c_{\varphi_Q}(\widetilde{Q})  + c_{\rho}(Q)\} \Big\}.
\end{align*}
\end{proof}

\subsection{Second approach: convex $\rho$ , quasi-convex $\U$}

In this section, we provide a dual representation of robust quasi-convex risk measures obtained with the second approach, that is, for a convex $\rho$ and a quasi-convex $\U$. 

\begin{proposition} 
Let $\rho: L^p \to[-\infty, + \infty]$ be a convex, continuous from above, cash-additive risk measure and let $\U$ be a quasi-convex and continuous from above family of uncertainty sets.

Then:

\noindent a) for any $Q \in \mathcal{P}$, $\varphi_Q$ satisfies quasi-convexity, decreasing monotonicity and continuity from above;

\noindent b) $\rhot$ is quasi-convex and has the following dual representation:
\begin{equation*} 
\rhot(X)=\sup_{Q, \tilde{Q} \in \mathcal{P}}  \big\{  R_{\varphi_Q}(\E_{\tilde{Q}}[-X],\tilde{Q}) - c_{\rho}(Q) \big\}, \quad \mbox{ for any } X \in L^p,
\end{equation*}
where $c_{\rho}$ is the minimal penalty function of $\rho$ and $R_{\varphi_Q}$ is the penalty-type functional of $\varphi_Q$.
\end{proposition}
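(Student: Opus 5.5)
For part a) I would verify the three properties of $\varphi_Q(X)=\sup_{Z\in\U_X}\E_Q[-Z]$ one at a time, mirroring the arguments already used for $\rhot$ in Proposition \ref{prop: properties robust-cqco} and in the proof of Corollary \ref{cor: dual corollary-Lp}.

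\emph{Monotonicity.} If $X\le Y$, monotonicity of $\U$ (standing assumption) gives $\U_X\supseteq\U_Y$. Hence $\varphi_Q(X)\ge\varphi_Q(Y)$.

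\emph{Quasi-convexity.} By quasi-convexity of $\U$, $\U_{\alpha X+(1-\alpha)Y}\subseteq \U_X\cup\U_Y$, so
\[
\varphi_Q(\alpha X+(1-\alpha)Y)\le \sup_{Z\in \U_X\cup\U_Y}\E_Q[-Z]=\max\{\varphi_Q(X),\varphi_Q(Y)\}.
\]
This is item c) of Proposition \ref{prop: properties robust-cqco} applied to the functional $Z\mapsto \E_Q[-Z]$, which is itself a monotone risk measure.

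\emph{Continuity from above.} This is verbatim the computation in the proof of Corollary \ref{cor: dual corollary-Lp}. Monotonicity gives $\varphi_Q(X)\ge\sup_n\varphi_Q(X_n)$, and $\cup_n\U_{X_n}=\U_X$ turns the iterated supremum into $\varphi_Q(X)$.

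For part b), quasi-convexity of $\rhot$ is immediate from Proposition \ref{prop: properties robust-cqco} c). For the representation, I first use the dual representation of the convex cash-additive $\rho$:
\[
\rho(Z)=\sup_{Q}\{\E_Q[-Z]-c_\rho(Q)\}.
\]
Interchanging suprema then gives
\[
\rhot(X)=\sup_{Q\in\PP}\{\varphi_Q(X)-c_\rho(Q)\}.
\]
This is the same step as \eqref{eq: rhot-000}, but here it needs no convexity of $\U$: it is just a swap of two suprema together with $\sup_Z(\E_Q[-Z]-c)=\varphi_Q(X)-c$ for $c$ finite. When $c_\rho(Q)=+\infty$ the term contributes $-\infty$ and can be discarded; I must take care that $\varphi_Q(X)=+\infty$ does not create an $\infty-\infty$ case. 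Next, by part a), each $\varphi_Q$ is a monotone, quasi-convex, continuous-from-above functional on $L^p$ with values in $[-\infty,+\infty]$. The representation \eqref{eq: repr rho} of \citet{CVMMM11,frittelli-maggis} therefore applies and yields
\[
\varphi_Q(X)=\sup_{\tilde Q\in\PP}R_{\varphi_Q}(\E_{\tilde Q}[-X],\tilde Q).
\]
Substituting this and pulling the constant $-c_\rho(Q)$ inside the inner supremum gives the claimed double-supremum formula.

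The main obstacle I expect is checking that the hypotheses of the Frittelli--Maggis representation are really met by $\varphi_Q$ on $L^p$. In the $L^p$ case with $p<\infty$ they use lower semicontinuity in the norm (or $\sigma(L^p,L^q)$) topology, and continuity from above is the natural substitute only on $L^\infty$. I would handle this as the paper does in Proposition \ref{prop: dual-repr-gen}, namely by citing Theorem 2.9, Corollary 2.14 and Lemma 3.2 of \citet{frittelli-maggis} with continuity from above as the regularity assumption. If a gap appears, I would add that $\varphi_Q$ is $\sigma(L^p,L^q)$-lower semicontinuous, obtained via quasi-convexity plus continuity from above, as in those references. A secondary point is that $\varphi_Q$ may take the value $+\infty$, which the extended-valued version of the representation allows.
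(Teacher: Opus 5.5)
Your proposal is correct and follows essentially the same route as the paper: part a) is checked property by property exactly as there (quasi-convexity from $\U_{\alpha X+(1-\alpha)Y}\subseteq\U_X\cup\U_Y$, monotonicity and continuity from above as in Corollary~\ref{cor: dual corollary-Lp}), and part b) combines $\rhot(X)=\sup_{Q}\{\varphi_Q(X)-c_\rho(Q)\}$ with the Frittelli--Maggis representation of each $\varphi_Q$. Your worry about the regularity hypothesis on $L^p$ with $p<\infty$ is harmless: for a monotone map, continuity from above already gives norm lower semicontinuity (take a fast subsequence $X_{n_j}$ and use $\sup_{j\ge k}X_{n_j}\downarrow X$), and quasi-convexity then upgrades this to $\sigma(L^p,L^q)$-lower semicontinuity.
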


\begin{proof}
a) \textit{Quasi-convexity.}  By quasi-convexity of $\U$, for any $X,Y \in L^p$ and $\alpha \in [0,1]$ it holds that
 \begin{align*}
\varphi_Q(\alpha X+ (1-\alpha)Y) &=\sup_{Z \in \U_{\alpha X+ (1-\alpha)Y}}\E_Q[-Z] \\
&\leq \sup_{Z \in \U_{X}\cup \U_Y}\E_Q[-Z] \\
&= \max\{\varphi_Q(X); \varphi_Q(Y) \}.
 \end{align*}

\textit{Monotonicity} and \textit{continuity from above} of $\varphi_Q$ can be proved as in Corollary \ref{cor: dual corollary-Lp}.

b) On the one hand, by Proposition 2.5 and Theorem 2.9 of \citet{frittelli-maggis} and by item a), it follows that
\begin{equation} \label{eq: varphi-repres}
\varphi_Q(X)= \sup_{\tilde{Q} \in \mathcal{P}} R_{\varphi_Q}(\E_{\tilde{Q}}[-X],\tilde{Q}), \quad \mbox{ for any } X \in L^p.
\end{equation}

 On the other hand, by the dual representation of $\rho$ (see \citet{follm-schied} and \citet{fritt-rg}) and by definition of $\rhot$, 
 \begin{align}
 \rhot(X) 
 &= \sup_{Z \in \U_X} \sup_{Q \in \mathcal{P}} \{ \E_Q[-Z]- c_{\rho}(Q) \} \notag \\
 &= \sup_{Q \in \mathcal{P}}   \{ \sup_{Z \in \U_X} \E_Q[-Z]- c_{\rho}(Q) \} \notag\\
  &= \sup_{Q \in \mathcal{P}}   \{ \varphi_Q(X) - c_{\rho}(Q) \}. \label{eq: rhot0}
 \end{align}
Then, \eqref{eq: varphi-repres} and \eqref{eq: rhot0} imply that, for any $X \in L^p$,
\begin{align*}
    \rhot(X) 
     &= \sup_{Q \in \mathcal{P}}   \bigg\{ \sup_{\tilde{Q} \in \mathcal{P}} R_{\varphi_Q}(\E_{\tilde{Q}}[-X],\tilde{Q}) - c_{\rho}(Q) \bigg\} \\
     &= \sup_{Q, \tilde{Q} \in \mathcal{P}}  \big\{  R_{\varphi_Q}(\E_{\tilde{Q}}[-X],\tilde{Q}) - c_{\rho}(Q) \big\}.
\end{align*}
\end{proof}

\section{Acceptance families under robustification/uncertainty}
\label{sec:acceptance_families}

It is well known (see \citet{Delb, follm-schied}) that convex cash-additive risk measures are in a one-to-one correspondence with acceptance sets
\begin{equation*} 
\mathcal{A}_\rho\triangleq\{X\in L^p:\rho(X)\le 0\},
\end{equation*}
formed by all positions that are acceptable and do not require any extra margin.
For risk measures that are not necessarily cash-additive, instead, a single acceptance set defined with respect to the target level $0$ is no more enough but a family of acceptance sets at different target levels is needed. Financially speaking, for general (e.g., quasi-convex or cash-subadditive) risk measures, the notion of acceptability depends on the specified target level (see \citet{drapeau-kupper}).

We recall from \citet{drapeau-kupper}, Theorem 1, that any quasi-convex risk measure $\rho$ is in a one-to-one correspondence with the family $(\mathcal{A}^m_{\rho})_{m \in \mathbb{R}}$ of acceptance sets given by
\begin{equation} 
\label{eq: Am-rho}
\mathcal{A}^m_{\rho}\triangleq \{ X \in L^{p}: \rho(X) \leq m\}, \quad \mbox{ for any } m \in \mathbb{R},    
\end{equation}
via
\begin{equation} \label{eq: rho-Am}
\rho(X)= \inf\{ m \in \mathbb{R}: X \in \mathcal{A}^m_{\rho}\}, \quad \mbox{ for any } X \in L^{p}.
\end{equation}
In other words, each set $\mathcal{A}^m_\rho$ collects the positions that are acceptable at the target level $m$, meaning that the perceived risk of $X$ does not exceed the threshold $m$, and  $\rho(X)$ represents the minimal level $m$ for which $X$ becomes acceptable.
The family of acceptance sets thus provides a richer description of acceptability across different tolerance levels, capturing situations where the admissibility of a position depends on the benchmark or regulatory target. 
\bigskip

In the following result, we provide a relationship between the family $(\mathcal{A}^m_{\rho})_{m \in \mathbb{R}}$ of acceptance sets of the initial risk measure $\rho$ and the family $(\mathcal{A}^m_{\rhot})_{m \in \mathbb{R}}$ of the robust risk measure $\rhot$.

\begin{proposition} 
Let $\U$ be a convex family of uncertainty sets, $\rho$ be a quasi-convex risk measure and $\rhot$ be the corresponding robust risk measure.

Then the following statements hold.

\noindent a) $X \in \A^m_{\rhot} \, \Longleftrightarrow \, \U_X \subseteq \mathcal{A}_{\rho}^m$.

\noindent b) $\rhot(X)= \inf\{m \in \mathbb{R}:  \U_X \subseteq \mathcal{A}_{\rho}^m\}$ for any $X \in L^{p}$.

\noindent c) If $\rho$ is also cash-additive (hence, convex), then $\rhot(X)= \inf\{m \in \mathbb{R}:  \U_X +m \subseteq \mathcal{A}_{\rho}^0\}$ for any $X \in L^{p}$.
\end{proposition}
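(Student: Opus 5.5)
The plan is to argue directly from the definitions \eqref{eq: rho robust} and \eqref{eq: Am-rho}. Convexity of $\U$ and quasi-convexity of $\rho$ serve only to place $\rhot$ in the quasi-convex framework of \citet{drapeau-kupper}, via Proposition \ref{prop: properties robust-cqco} d). The set-theoretic equivalences themselves do not use these assumptions.

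For a), I would fix $m\in\mathbb{R}$ and $X\in L^p$. By definition, $X\in\A^m_{\rhot}$ means $\sup_{Z\in\U_X}\rho(Z)\le m$. A supremum is at most $m$ exactly when every element is at most $m$. So this holds if and only if $\rho(Z)\le m$ for all $Z\in\U_X$, that is, $\U_X\subseteq\A^m_\rho$. No step here is delicate. The convention $\sup\emptyset=-\infty$ is consistent with this: an empty $\U_X$ lies in every $\A^m_\rho$ and gives $\rhot(X)=-\infty$.

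For b), I would apply \eqref{eq: rho-Am} to $\rhot$. That representation is valid for any extended-real functional, since $\inf\{m:\rhot(X)\le m\}=\rhot(X)$, with the conventions $\inf\emptyset=+\infty$ and the value $-\infty$ when every $m$ works. Substituting a) into it gives $\rhot(X)=\inf\{m\in\mathbb{R}:\U_X\subseteq\A^m_\rho\}$.

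For c), I would use cash-additivity of $\rho$ to write $\A^m_\rho=\A^0_\rho-m$, since $\rho(Z)\le m$ iff $\rho(Z+m)\le 0$. Hence $\U_X\subseteq\A^m_\rho$ iff $\U_X+m\subseteq\A^0_\rho$, and the formula follows from b). The parenthetical ``hence, convex'' is the standard fact that quasi-convexity plus cash-additivity yields convexity. Its only role is to connect with the classical acceptance-set picture, so I would just recall it. The only potential obstacle is bookkeeping with infinite values of $\rhot$, and the conventions above take care of it.
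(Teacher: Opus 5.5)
Your proof is correct and follows the paper's argument step by step: a) by unpacking the supremum, b) by inserting a) into the level-set formula \eqref{eq: rho-Am}, and c) by translating acceptance sets via cash-additivity. The only differences are that you check $\A^m_\rho=\A^0_\rho-m$ directly from $\rho(Z+m)=\rho(Z)-m$ instead of citing Proposition 2 of Drapeau and Kupper, and you rightly note that the identities need neither convexity of $\U$ nor quasi-convexity of $\rho$, which the paper uses only to place $\rhot$ in the Drapeau--Kupper one-to-one correspondence.
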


\begin{proof}
\noindent a) From Proposition \ref{prop: properties robust-cqco}, item d), it follows that $\rhot$ is a quasi-convex risk measure. Hence, by Theorem 1 of \citet{drapeau-kupper}, it is in a one-to-one correspondence with the family $(\mathcal{A}^m_{\rhot})_{m \in \mathbb{R}}$ defined in \eqref{eq: Am-rho}. Furthermore, by definition of $\rhot$, it follows that
\begin{align*}
\A^m_{\rhot} &= \{X \in L^{p}: \rhot(X) \leq m \}\\
&= \{X \in L^{p}: \sup_{Z \in \U_X} \rho(Z) \leq m \}\\
&= \{X \in L^{p}: \rho(Z) \leq m \, \mbox{ for any } Z \in \U_X\}.
\end{align*}

On the one hand, if $X \in \A^m_{\rhot}$ then, by the argument above, $\rho(Z) \leq m$ for any $Z \in \U_X$. Hence, $\U_X \subseteq \mathcal{A}_{\rho}^m$.
On the other hand, if $\U_X \subseteq \mathcal{A}_{\rho}^m$ then
$$
\rhot(X)= \sup_{Z \in \U_X} \rho(Z) \leq  \sup_{Z \in \mathcal{A}_{\rho}^m} \rho(Z) \leq m,
$$
implying that $X \in \A^m_{\rhot}$.

\noindent b) follows immediately from item a) and \eqref{eq: rho-Am}.

\noindent c) If $\rho$ is quasi-convex and cash-additive, then it is also convex (see \citet{CVMMM11,drapeau-kupper,frittelli-maggis}). Hence, by Proposition 2 of \citet{drapeau-kupper}, $\mathcal{A}^m_{\rho}=\mathcal{A}^0_{\rho}-m$ for any $m \in \mathbb{R}$. This and item b) imply that
\begin{align*} 
\rhot(X) &= \inf\{m \in \mathbb{R}:  \U_X \subseteq \mathcal{A}_{\rho}^m\}   \\
&= \inf\{m \in \mathbb{R}:  \U_X \subseteq \mathcal{A}_{\rho}^0-m\}   \\
&= \inf\{m \in \mathbb{R}:  \U_X +m \subseteq \mathcal{A}_{\rho}^0\}
\end{align*}
for any $X \in L^{p}$.
\end{proof}

The following example (on $L^{\infty}$) illustrates the inclusion in item a) of the previous result where the acceptance sets of both $\rho$ and $\rhot$ can be explicitly computed.

\begin{example} 
Fix $\varepsilon, K>0$ and consider, for any $X \in L^{\infty}$,
\begin{align*}
\U_X &=\{ Z \in L^{\infty}: - \varepsilon \leq Z-X \leq \varepsilon\} \\
\rho(X)&=\E[-X] \vee K.
\end{align*}
 It is immediate to check that $\rho$ is a quasi-convex risk measure. Furthermore, the associated robust risk measure reduces to
 \begin{equation*}
 \rhot(X)= \sup_{ X-\varepsilon \leq Z \leq X+\varepsilon} \big(\E[-Z] \vee K \big) 
 =\big(\E[-X]+\varepsilon  \big)\vee K.
 \end{equation*}
It then follows that the family of acceptance sets of $\rho$ is given by
 \begin{equation*}
\mathcal{A}^m_{\rho}= \left\{
\begin{array}{cl}
\{X \in L^{\infty}: \E[-X] \leq m \}
&; \, m \geq K \\
\emptyset&; \, m <K
\end{array}
\right.,
 \end{equation*}
while the family of acceptance sets of $\rhot$ is 
 \begin{equation*}
\mathcal{A}^m_{\rhot}= \left\{
\begin{array}{cl}
\{X \in L^{\infty}: \E[-X]\leq m -\varepsilon\}
&; \, m \geq K \\
\emptyset&; \, m <K
\end{array}
\right.
 \end{equation*}

The explicit formulation of the families above allows to verify directly the inclusion in a) in the previous result. 
For any $m<K$, indeed, we have that both $\mathcal{A}^m_\rho$ and $\A^m_{\tilde{\rho}}$ are empty, so that 
 $X \notin \A^m_{\tilde{\rho}}$ is equivalent to 
 $\U_X \not\subseteq \mathcal{A}^m_\rho$.
 Consider now the case of $m\geq K$. For any $X \in \A^m_{\tilde{\rho}}$ it holds that
 $\mathbb{E}[-X]\leq m-\varepsilon$. Thus, for any $Z \in \U_X$ we have
 $\mathbb{E}[-Z] \leq \mathbb{E}[-X+\varepsilon] = \mathbb{E}[-X] +\varepsilon \leq m$, so
 $Z \in \mathcal{A}^m_\rho$. This shows that:
 $X \in \A^m_{\tilde{\rho}} \Rightarrow \U_X \subseteq \mathcal{A}^m_\rho$. On the other hand, if  $\U_X \subseteq \mathcal{A}^m_\rho$, then $\rho(Z) \leq m $ for any $Z \in \U_X$. In particular, for 
 $Z=X-\varepsilon$ we deduce
 $\rho(X-\varepsilon) \leq m$, which implies $\mathbb{E}[-X]\leq m-\varepsilon$ and thus $X \in \A^m_{\tilde{\rho}}$. This shows that:  $\U_X \subseteq \mathcal{A}^m_\rho \Rightarrow X \in \A^m_{\tilde{\rho}}$. 
\end{example}

\section{Applications and examples}
\label{sec: appl car-examples}

In this section, we provide two applications of the robustification: the former to capital allocations, the latter to the law invariant case with uncertainty sets based on the Wasserstein distance.
 Some illustrative examples of families of uncertainty sets and robust risk measures are also given.

\subsection{Capital allocation under robustification /ambiguity}

We start recalling the main features of capital allocation problems in order to provide an application to the robustness issue.

Roughly speaking, a capital allocation rule replies to the question of how to share the capital requirement for an aggregate risky position (prescribed by a given risk measure $\rho$) among its different sub-units. Here below, we recall the classical definition of a capital allocation rule.

\begin{definition}[see \citet{kalk}]
 Given a risk measure $\rho : L^{p} \to [-\infty,+\infty]$, a \textit{capital allocation rule (CAR)} for $\rho$ is a map $\Lambda:L^{p}\times L^{p} \to [-\infty,+\infty]$ such that $\Lambda(Y,Y)=\rho(Y)$  for every $Y \in L^{p}$.
\end{definition}

Notice that a CAR $\Lambda$ is defined for any pair $(X,Y)\in L^{p}\times L^{p}$, where $Y$ is always interpreted as an aggregate position (or portfolio) and $X$ as a sub-unit (or sub-portfolio).
The condition $\Lambda(Y,Y)=\rho(Y)$ means that the capital allocated to $Y$ when considered as a stand-alone portfolio is exactly the margin $\rho(Y)$. Moreover,
\citet{kalk} defines the following desirable properties for a capital allocation rule $\Lambda$ associated to a risk measure $\rho$:

\begin{itemize}
\item[-]{\it Full allocation}: 
for any $(Y_i)_{i=1,...,n} \subseteq L^p,Y \in L^p$ with $\sum_{i=1}^n Y_i=Y$, then $\sum_{i=1}^n {\Lambda}(Y_i,Y) = {\Lambda}(Y,Y)$.

\item[-] {\it No-undercut}: $\Lambda(X,Y)\leq \rho(X)$ for every  $X,Y\in L^{p}$.
\end{itemize}

Full allocation is satisfied by the most popular capital allocation rules at least for the coherent case, and it implies that the whole risk capital is entirely split among the sub-units of a risky position. 
Since $\Lambda(X,Y)$ represents the capital allocated to the sub-portfolio $X$ to hedge the total portfolio $Y$, the no-undercut axiom requires that the capital allocated to $X$ as a sub-unit of $Y$ does not exceed the margin required for $X$. In the terminology of \citet{tsanakas}, no-undercut corresponds to the {\it non-split} requirement of $X$ from $Y$,  while from a game theory perspective, it corresponds to the core property of \citet{denault}.

Although full allocation and no-undercut are somehow incompatible when dealing with the non-coherent case, capital allocation can still be performed for general risk measures without losing soundness (see, e.g., \citet{CRG, Canna-Ce-Ro-CAR21}). 

When ambiguity is introduced, both the evaluation of total risk and the capital allocation are affected. To capture this effect, we extend the notion of capital allocation to
the robust framework by defining the ``robustified" capital allocation rule $\tilde{\Lambda}$ associated with the robust risk measure $\tilde\rho$.

\begin{definition}
Let $\U$ be a family of uncertainty sets and $\Lambda$ be a CAR for a risk measure $\rho$.

We define the robustified version of $\Lambda$ as a map $\tilde{\Lambda}:L^p \times L^p \to [-\infty; + \infty]$ given by
\begin{equation} \label{eq: Lambda-tilde}
\tilde{\Lambda}(X,Y) \triangleq \sup_{Z \in \U_X} \Lambda(Z,Y), \quad \mbox{ for any } X, Y \in L^p.
\end{equation}
\end{definition}

The properties of $\tilde{\Lambda}$ as a CAR for the robust risk measure $\rhot$ are investigated here below. As expected, the property of full allocation is not preserved under robustification. No-undercut, instead, is fulfilled. 

\begin{proposition} 
Let $\Lambda$ be a CAR for a risk measure $\rho$ and assume that the family $\U=(\U_X)_{X \in L^{p}}$ of uncertainty sets is such that $X \in \U_X$ for any $X \in L^p$. 

a) If $\Lambda$ satisfies no-undercut, then the robustified $\tilde{\Lambda}$ defined in \eqref{eq: Lambda-tilde} satisfies no-undercut and
\begin{equation}
\label{eq: no-und}
\rho(Y) \leq \tilde{\Lambda}(Y,Y) \leq \rhot(Y), \mbox{ for any } Y \in L^p.
\end{equation}

\noindent b) If $\cup_{i=1}^n \U_{Y_i} \supseteq \U_Y$ for any $(Y_i)_{i=1,...,n} \subseteq L^p$ and $Y \in L^p$ with $\sum_{i=1}^n Y_i=Y$,
then $\tilde{\Lambda}(Y,Y) \leq \max_{i=1,...,n} \tilde{\Lambda}(Y_i,Y)$.

If, in addition, $0 \in \U_{Y_i}$ for any $i=1,...,n$ and $\Lambda(0,Y) \geq 0$, then 
$\tilde{\Lambda}(Y,Y) \leq \sum_{i=1}^n \tilde{\Lambda}(Y_i,Y)$.
\end{proposition}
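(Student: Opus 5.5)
The proof is a direct unwinding of the definition \eqref{eq: Lambda-tilde} of $\tilde{\Lambda}$, combined with the assumption $X \in \U_X$ and the defining property $\Lambda(Y,Y)=\rho(Y)$ of a CAR. No structural property of $\U$ (convexity, quasi-convexity) is needed for a). In b), only the covering hypothesis $\cup_i \U_{Y_i} \supseteq \U_Y$ plays the role that quasi-convexity of $\U$ played in Proposition \ref{prop: properties robust-cqco}, item c).

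For a), no-undercut of $\tilde{\Lambda}$ is to be understood with respect to $\rhot$, i.e.\ $\tilde{\Lambda}(X,Y) \leq \rhot(X)$. Fix $X,Y$. For every $Z \in \U_X$, no-undercut of $\Lambda$ gives $\Lambda(Z,Y) \leq \rho(Z) \leq \sup_{Z' \in \U_X}\rho(Z') = \rhot(X)$. Taking the supremum over $Z \in \U_X$ yields $\tilde{\Lambda}(X,Y) \leq \rhot(X)$. Setting $X=Y$ gives the right inequality in \eqref{eq: no-und}. For the left inequality, since $Y \in \U_Y$ we have $\tilde{\Lambda}(Y,Y) = \sup_{Z \in \U_Y}\Lambda(Z,Y) \geq \Lambda(Y,Y) = \rho(Y)$. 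The hypothesis $X\in\U_X$ is used only here.

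For the first claim of b), take $Y=\sum_i Y_i$. By hypothesis, $\U_Y \subseteq \cup_{i=1}^n \U_{Y_i}$, hence
\begin{equation*}
\tilde{\Lambda}(Y,Y)=\sup_{Z\in\U_Y}\Lambda(Z,Y)\leq \sup_{Z\in \cup_i\U_{Y_i}}\Lambda(Z,Y)=\max_{i=1,\dots,n}\sup_{Z\in\U_{Y_i}}\Lambda(Z,Y)=\max_{i}\tilde{\Lambda}(Y_i,Y).
\end{equation*}
The supremum over a finite union equals the maximum of the suprema, and this holds in $[-\infty,+\infty]$.

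For the second claim of b), since $0 \in \U_{Y_i}$, each term satisfies $\tilde{\Lambda}(Y_i,Y) \geq \Lambda(0,Y) \geq 0$. A maximum of finitely many nonnegative extended reals is bounded by their sum; there is no $\infty-\infty$ issue because all terms are in $[0,+\infty]$. Combining this with the first claim gives $\tilde{\Lambda}(Y,Y) \leq \sum_i \tilde{\Lambda}(Y_i,Y)$.

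The only delicate point is the extended-real arithmetic in this last step, and it is settled by the nonnegativity just noted. Otherwise every step is a one-line supremum comparison, so I expect no real obstacle.
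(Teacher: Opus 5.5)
Your proposal is correct and follows the paper's proof essentially line by line. No-undercut of $\Lambda$ bounds $\tilde{\Lambda}(X,Y)$ by $\rhot(X)$, and $Y \in \U_Y$ gives the lower bound $\rho(Y)$; in b), the covering hypothesis gives the max bound, and nonnegativity of each $\tilde{\Lambda}(Y_i,Y)$ (from $0 \in \U_{Y_i}$ and $\Lambda(0,Y) \geq 0$) turns it into the sum bound, with your extended-real remark being a clarification the paper leaves implicit.
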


\begin{proof}
a) The no-undercut of $\Lambda$ implies the no-undercut of $\tilde{\Lambda}$. Indeed, for any $X,Y \in L^p$,
$$
\tilde{\Lambda}(X,Y) =\sup_{Z \in \U_X} \Lambda(Z,Y) \leq \sup_{Z \in \U_X} \rho(Z)= \rhot(X).
$$
In particular, $\tilde{\Lambda}(Y,Y) \leq \rhot(Y)$.

It remains to prove the first inequality in \eqref{eq: no-und}. Since $Y \in \U_Y$ for any $Y \in L^p$, it holds that
$$
\tilde{\Lambda}(Y,Y) =\sup_{Z \in \U_Y} \Lambda(Z,Y) \geq \Lambda(Y,Y)=\rho(Y),
$$
where the last equality follows from $\Lambda$ being a CAR for $\rho$.
This completes the proof of item a).

\noindent b) Consider any $(Y_i)_{i=1,...,n} \subseteq L^p$ and $Y \in L^p$ with $\sum_{i=1}^n Y_i=Y$. By definition of $\tilde{\Lambda}$ and by the properties on the uncertainty sets,
\begin{align}
\tilde{\Lambda}(Y,Y) &= \sup_{Z \in \U_{Y} }\Lambda(Z,Y) \notag \\
&\leq \sup_{Z \in \cup_{i=1}^n \U_{Y_i} }\Lambda(Z,Y) \label{eq: ineq-L} \\
&\leq \max_{i=1,...n } \sup_{Z \in \U_{Y_i}} \Lambda(Z_i,Y). \notag
\end{align}

Assume now that $0 \in \U_{Y_i}$ for any $i=1,...,n$ and $\Lambda(0,Y) \geq 0$.
This implies that, for any $i=1,..,n$,
\begin{equation} \label{eq: L-positive}
\sup_{Z \in \U_{Y_i}} \Lambda(Z_i,Y) \geq \Lambda(0,Y) \geq 0.
\end{equation}
From \eqref{eq: ineq-L} and \eqref{eq: L-positive}, it then follows that
\begin{align*}
\tilde{\Lambda}(Y,Y) & \leq \sup_{Z \in \cup_{i=1}^n \U_{Y_i} }\Lambda(Z,Y) \\
&\leq \sum_{i=1}^n \sup_{Z \in \U_{Y_i}} \Lambda(Z,Y) \\
&=\sum_{i=1}^n \tilde{\Lambda}(Y_i,Y).
\end{align*}
\end{proof}

The previous result underlines that, in general, $\tilde{\Lambda}$ is not necessarily a CAR but only a sub-CAR, meaning that $\tilde{\Lambda}(Y,Y)$ does not necessarily coincide with $\rhot(Y)$ but $\tilde{\Lambda}(Y,Y) \leq \rhot(Y)$. 
See \citet{CRG} for a detailed discussion on sub-CARs.
In particular, for a stand-alone portfolio $(Y,Y)$, the robustified $\tilde{\Lambda}$ requires to allocate a greater amount than the margin evaluated with the non-robust $\rho(Y)$ but a lower amount than $\rhot(Y)$, that is, the margin prescribed by the robust $\rhot$. Loosely speaking, the proposed $\tilde{\Lambda}$ takes into account ambiguity (by ``overloading'' $\rho(Y)$) in a moderate way, that is, without requiring the allocation of the whole robust capital $\rhot(Y)$.

As expected, also full allocation fails to be satisfied for the robustified version of $\Lambda$. This fact is not surprising at all also reminding that no-undercut and full allocation do not hold together for CARs of general risk measures (see the discussion in \citet{CRG}). However, under the additional assumptions that null positions belong to all $\U_{Y_i}$ and that $\Lambda (0,Y) \geq 0$, a weaker property than full allocation is fulfilled. This property means that the sum of the capital allocated to any sub-unit $Y_i$ is greater than the total capital allocated to the aggregate position $Y$. Financially speaking, $\sum_{i=1}^n \tilde{\Lambda}(Y_i,Y) - \tilde{\Lambda}(Y,Y) \geq 0$ can be interpreted as a ``security" extra capital to be allocated also in view of ambiguity/robustification. Note that hypothesis $\Lambda(0,Y) = 0$ (or, more generally, $\Lambda(0,Y) \geq 0$) is reasonable and quite commonly assumed in the literature. Indeed, this means that the capital allocated to a null position is zero (or, more generally, that even ``doing nothing” may require capital).

\subsection{Law invariant case under Wasserstein distance}

We apply now the robustification to the case of law invariant risk measures and uncertainty sets based on the Wasserstein distance. 
A notable example of a family of uncertainty sets is, indeed, given by closed balls, centered at $X \in L^p$ with respect to a suitable metric and with a fixed radius $\epsilon > 0$. Specifically, one can consider 
the Wasserstein distance of order $p$ (see \citet{Villani21} for a comprehensive discussion of this metric), defined as
\[
d_{W_p}(X, Y) \triangleq \left( \int_0^1 |F_X(u) - F_Y(u)|^p \, du \right)^{1/p}, \quad \mbox{ for } p \in [1, +\infty).
\]
For $p = +\infty$, one can set $d_{W_\infty}(X, Y) = \lim_{p \to \infty} d_{W_p}(X, Y)$.
 Consider now uncertainty sets of the following form:
\begin{equation} \label{eq: wasser ball}
    \mathcal{U}_X = \{Z \in L^p : d_{W_p}(X, Y) \leq \epsilon\}, \quad \mbox{ for any } X \in L^p.
\end{equation}

Like closed balls, these sets fit directly into our framework and have the additional property of being convex. Moreover, it is straightforward to verify that they generate a law-invariant, closed, order preserving and convex family.

In the following result, we provide an estimate on $\tilde{\rho}$ when uncertainty sets are based on the Wasserstein distance. This result is inspired by Theorem 2 of \citet{righi-arxiv}. Compared to the result of this author (holding for convex cash-additive risk measures), ours is formulated for quasi-convex and cash-subadditive risk measures. Due to this generalization, we are only able to prove an estimate of $\rhot$.

\begin{proposition}
    Let $p\in[1, +\infty]$ and  $\rho$ be a quasi-convex and cash-subadditive law invariant risk measure with dual representation \eqref{eq: repr rho} with the supremum attained, i.e. $\rho(X)= \max_{Q\in \mathcal{P} }
    R_\rho(\mathbb{E}_Q[-X],Q)$. 
    
If the uncertainty sets $\U_X$ are chosen as in \eqref{eq: wasser ball}, then 
$$
 \tilde{\rho}(X) \leq \rho(X) + \epsilon \Bigg\| \frac{ d {Q}_X^*}{d P}\Bigg\|_q, \quad \mbox{ for any } X \in L^p,
$$ 
where ${Q}_X^* \in \argmax R_\rho(\mathbb{E}_Q[-X],Q)$.
\end{proposition}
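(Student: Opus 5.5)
The plan is to compare $\rho(Z)$ with $\rho(X)$ one perturbation $Z\in\U_X$ at a time, entirely inside the dual representation \eqref{eq: repr rho}, and then take the supremum over the Wasserstein ball \eqref{eq: wasser ball}. Two ingredients from earlier in the paper are used. The first is the non-expansivity of $R_\rho(\cdot,Q)$ under cash-subadditivity, established in the proof of Theorem \ref{thm: dual represen-qco rho-Lp}: $R_\rho(t',Q)\le R_\rho(t,Q)+|t'-t|$. The second is the increasing monotonicity of $R_\rho$ in $t$. Throughout, $q$ is the conjugate exponent of $p$.

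\textbf{Step 1 (transport estimate).} Fix $Q\in\PP$ with $\frac{dQ}{dP}\in L^q$ and fix $Z\in\U_X$. I would bound $\E_Q[-Z]-\E_Q[-X]$ by $\epsilon\|\frac{dQ}{dP}\|_q$, using the quantile coupling that realises $d_{W_p}$. Law invariance of $\rho$ means $\rho(Z)$ depends only on the law of $Z$. I would therefore replace $Z$ by $Z'\sim Z$ chosen comonotone with $X$, for instance $Z'=F_Z^{-1}(U)$ and $X=F_X^{-1}(U)$ with a common uniform $U$, which is possible on an atomless space. For this choice $\|Z'-X\|_p=d_{W_p}(X,Z)\le\epsilon$. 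H\"older's inequality then gives
\[
\E_Q[-Z']\le \E_Q[-X]+\epsilon\Big\|\tfrac{dQ}{dP}\Big\|_q .
\]

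\textbf{Step 2 (evaluate at $Q_X^*$).} I would write $\rho(X)=R_\rho(\E_{Q^*_X}[-X],Q^*_X)$, which is possible because the supremum in \eqref{eq: repr rho} is attained. Applying Step 1 with $Q=Q^*_X$ and then non-expansivity and monotonicity of $R_\rho(\cdot,Q^*_X)$ yields
\[
R_\rho(\E_{Q^*_X}[-Z'],Q^*_X)\le \rho(X)+\epsilon\Big\|\tfrac{dQ^*_X}{dP}\Big\|_q .
\]
The right-hand side does not depend on $Z$.

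\textbf{Step 3 (main obstacle).} It remains to show $\rho(Z)=\rho(Z')\le R_\rho(\E_{Q^*_X}[-Z'],Q^*_X)$ plus a negligible term. In other words, $Q^*_X$ has to remain an (approximate) dual maximizer for every $Z'$ in the ball. This is the genuinely delicate point.

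My first attempt would use law invariance to choose $Q^*_X$ in canonical form, with density $\frac{dQ^*_X}{dP}=g(U)$ decreasing in $U$, so that the density is anti-comonotone with $X$. By the Hardy--Littlewood inequality, every $Z'=F_Z^{-1}(U)$ from Step 1 is then also anti-comonotone with this density. For law-invariant $\rho$, the dual supremum over all $Q$ sharing the law of $\frac{dQ^*_X}{dP}$ is attained at the anti-comonotone one. The quantity to be controlled therefore reduces to comparing maximizers across different density laws.

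Here I would invoke quasi-concavity of $R_\rho$ in $t$, together with the fact that all $Z'$ share the coupling variable $U$ with $X$, to argue that the optimal density law selected at $X$ also dominates at $Z'$. If this cannot be closed in full generality, the fallback is to state explicitly the assumption that $Q^*_X$ can be chosen to be a common maximizer for the comonotone perturbations $Z'$ of $X$.

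\textbf{Step 4 (conclusion).} Granting Step 3, I would take the supremum over $Z\in\U_X$ in the definition \eqref{eq: rho robust} of $\rhot$. This gives
\[
\rhot(X)\le\rho(X)+\epsilon\Big\|\tfrac{dQ^*_X}{dP}\Big\|_q,
\]
which is the claimed estimate. The case $p=+\infty$, $q=1$ follows by the same argument, using $\|Z'-X\|_\infty=d_{W_\infty}(X,Z)$.
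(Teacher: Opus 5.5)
Your Step 3 is a genuine gap, and it cannot be closed. In general $Q^*_X$ does not remain even an approximate maximizer on the ball, and in fact the proposition fails for every $p\in[1,+\infty)$.

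Take $\rho(X)\triangleq\max\{\E[-X],\,\mathrm{ES}_\alpha(X)-c\}$ with $c>0$, $\alpha\in(0,1)$ and $\mathrm{ES}_\alpha(X)\triangleq\sup\{\E_Q[-X]:\ dQ/dP\le 1/\alpha\}$. This $\rho$ is convex, cash-additive, law invariant and continuous from above, with $R_\rho(t,Q)=t-c_\rho(Q)$, and the supremum in its dual representation is attained. At $X=0$ we have $\rho(0)=0=-c_\rho(P)$. For every $Q\neq P$ we have $c_\rho(Q)>0$: test with $Y=-\delta(1_B-P(B))$, where $Q(B)>P(B)$ and $0<\delta<c$, which gives $\rho(Y)=0<\E_Q[-Y]$. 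Hence $Q^*_0=P$ is the unique maximizer, and the claimed bound reads $\rhot(0)\le\epsilon$. But $\U_0=\{Z:\|Z\|_p\le\epsilon\}$ contains $Z=-\epsilon\alpha^{-1/p}1_A$ with $P(A)=\alpha$. For this $Z$, $\rho(Z)\ge\mathrm{ES}_\alpha(Z)-c=\epsilon\alpha^{-1/p}-c$, which exceeds $\epsilon$ as soon as $\epsilon(\alpha^{-1/p}-1)>c$, whereas $R_\rho(\E_P[-Z],P)=\epsilon\alpha^{1-1/p}\le\epsilon$. Since $X=0$ is constant, your comonotone replacement is vacuous, and there is no ordering for Hardy--Littlewood or quasi-concavity in $t$ to exploit. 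The optimal density jumps from $1$ to $\alpha^{-1}1_A$, because it depends on the whole quantile profile of $Z'$, not on its coupling with $X$.

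The paper's proof does not get past this point either. Its chain passes from $\sup_{Z\in\U_X}\max_{Q}R_\rho(\E_Q[-Z],Q)$ to $\sup_{Z\in\U_X}R_\rho(\E_{Q^*_X}[-Z],Q^*_X)$, i.e.\ it silently treats $Q^*_X$ as a maximizer at every $Z$ in the ball. In general only $R_\rho(\E_{Q^*_X}[-Z],Q^*_X)\le\rho(Z)$ holds, which is the wrong direction for an upper bound. So your fallback (a common maximizer) is exactly the hypothesis both arguments need. Granting it, your Steps 1, 2 and 4 are correct and somewhat more direct than the paper's route through $f_{Q^*_X}$ and the rearrangement step $\sup_{X'\sim X}$: you compare $\E_{Q^*_X}[-Z']$ with $\E_{Q^*_X}[-X]$ directly, and you only need the hypothesis for the comonotone versions $Z'$. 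Without it, running your Steps 1--2 for every $Q$ with $dQ/dP\in L^q$, rather than only for $Q^*_X$, gives the valid estimate
\[
\rhot(X)\le\sup_{Q}\Big\{R_\rho(\E_Q[-X],Q)+\epsilon\Big\|\frac{dQ}{dP}\Big\|_q\Big\},
\]
with the norm inside the supremum. For $p=+\infty$ the stated bound does hold, but by a primal one-liner rather than via Step 3. There $Z'\ge X-\epsilon$, so $\rho(Z)=\rho(Z')\le\rho(X-\epsilon)\le\rho(X)+\epsilon$. This uses cash-subadditivity in the form $\rho(Y-m)\le\rho(Y)+m$ for $m\ge0$, which is the form underlying non-expansivity of $R_\rho$, together with $\|dQ/dP\|_1=1$.
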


\begin{proof}
Let $X \in L^p$ be arbitrarily fixed and let ${Q}_X^*\in \argmax R_\rho(\mathbb{E}_Q[-X],Q)$. 

Following \citet{righi-arxiv},
we denote by 
$$
    f_{Q_X^*}(Z)= \sup_{Z^\prime \sim Z} \mathbb{E}_{Q_X^*}[Z^\prime]
       =  \int_0^1 F^{-1}_{Z}(u) F^{-1}_{{\rm d}Q_X^*/{\rm d}P}(u) {\rm d}u, \quad \mbox{ for any } Z \in L^p.
    $$ 
Then, by \citet[Theorem 2]{righi-arxiv}, 
\begin{equation} \label{eq: ineq phi}
  \varphi_{Q_X^*}(X)= \sup_{Z \in \U_X} f_{Q_X^*}(-Z) \leq f_{Q_X^*}(-X) + \epsilon \left\| \frac{dQ_X^*}{dP} \right\|_q.
\end{equation}

From the definition  of $\tilde{\rho}$ and the dual representation of $\rho$,
it holds: 
    \begin{align}
        \tilde{\rho}(X) & = 
        \sup_{Z\in \mathcal{U}_X} \max_{Q \in \mathcal{P}} R_
        \rho\left(\mathbb{E}_Q[-X],Q\right) \notag\\
        &= \sup_{Z\in \mathcal{U}_X} R_\rho\left(\mathbb{E}_{Q_X^*}[-X],{Q_X^*}\right) \notag \\
        & = R_\rho\left(\sup_{Z \in \mathcal{U}_X}\mathbb{E}_{{Q_X^*}}[-Z],{Q_X^*}\right) \label{eq: R-sup}\\
        &= R_\rho\left(\varphi_{Q_X^*}(X) ,Q_X^*\right), \label{eq: 000R}
    \end{align}
where \eqref{eq: R-sup} follows from the non-expansivity of $R_\rho$ (as in the proof of Theorem \ref{thm: dual represen-qco rho-Lp}).
Applying \eqref{eq: ineq phi} and increasing monotonicity and non-expansivity of $R_\rho$ to \eqref{eq: 000R}, we obtain
\begin{align}
    \tilde{\rho}(X) &\leq  R_\rho \left( f_{Q_X^*}(-X)+ \epsilon \left\| \frac{dQ_X^*}{dP} \right\|_q, Q_X^*\right) \notag \\
    & \leq 
    R_\rho \left( f_{Q_X^*}(-X), Q_X^*\right)+ \epsilon \left\| \frac{dQ_X^*}{dP} \right\|_q \notag\\
    & = 
   \sup_{X^\prime \sim X} R_\rho \left(  \mathbb{E}_{Q_X^*}[-X^\prime], Q_X^*\right)+ \epsilon \left\| \frac{dQ_X^*}{dP} \right\|_q \label{eq: R-star}\\
   & \leq  
   \sup_{X^\prime \sim X} \rho(X^\prime)+ \epsilon \left\| \frac{dQ_X^*}{dP} \right\|_q \label{eq: R-star1} \\
    & = \rho(X) + \epsilon \left\| \frac{dQ_X^*}{dP} \right\|_q, \label{eq: R-star2}
\end{align}
where \eqref{eq: R-star} is due to non-expansivity of $R_\rho$ (see the proof of Theorem \ref{thm: dual represen-qco rho-Lp}), \eqref{eq: R-star1} to the dual representation of $\rho$, and \eqref{eq: R-star2} to law invariance of $\rho$.
\end{proof}

\subsection{Examples of families of uncertainty sets} 
\label{sec: examples}

Here below, we provide some examples of c-quasi-convex (and solid) families $\mathcal{U}$ of uncertainty sets built on quasi-convex and cash-subadditive risk measures.
\smallskip 

\begin{example} 
\label{ex: qco}
Given a quasi-convex and cash-subadditive risk measure $\rho_1$, consider
\begin{equation*}
\mathcal{U}_X =\left\{ Z \in L^\infty: |\rho_1(Z)-\rho_1(X)| \leq \varepsilon\right\}, \quad \mbox{ for any } X \in L^{\infty}. 
\end{equation*}

We are now going to show that the family $\mathcal{U}=(\U_X)_{X \in L^\infty}$ 
is c-quasi-convex.

Let $X,Y \in L^{\infty}$ and $\lambda \in [0,1]$ be fixed arbitrarily. 
Without loss of generality, we assume that $\rho_1(X) \geq \rho_1(Y)$. 
From the definition of $\mathcal{U}$ and quasi-convexity of $\rho_1$, it follows that, for any $Z\in \mathcal{U}_{\lambda X + (1-\lambda)Y}$,
\begin{equation}
\label{eq: ex U1}
\rho_1(Z) \leq \rho_1(\lambda X + (1-\lambda) Y)+ \varepsilon \leq \max \left\{ \rho_1(X),\rho_1(Y)\right\}+ \varepsilon=\rho_1(X)+ \varepsilon.
\end{equation}

If $\rho_1(Z) \geq \rho_1(X) - \varepsilon$, then \eqref{eq: ex U1} implies that $Z\in \mathcal{U}_X$, thus $ Z\in \mathcal{U}_X \subseteq \mathcal{U}_X \cup \mathcal{U}_Y +L^\infty_+$.

If $\rho_1(Z) < \rho_1(X) - \varepsilon$, instead, we need to verify that 
 there exist $K^*\in L^\infty_+$ and $Z^* \in \mathcal{U}_X \cup \mathcal{U}_Y$ such that $Z=Z^* + K^*$. In particular, it would be sufficient to prove that there exist 
 $k^* \in \mathbb{R}_+$ such that the random variable $Z^* \triangleq Z - k^*$ belongs to $\mathcal{U}_X$. Indeed, in this case, 
    \[
    Z = Z^* + k^* \in \mathcal{U}_X + L^\infty_+ \subseteq \mathcal{U}_X \cup \mathcal{U}_Y + L^\infty_+.
    \]
 To find a suitable $k^*$, define the function $f: \mathbb{R}_+ \to \mathbb{R}$ by:
$$
f(k) = \rho_1(Z-k), \quad \mbox{ for any } k \in \mathbb{R}_+.$$
It follows immediately that $f$ is increasing (by monotonicity of $\rho_1$) and $f(0) = \rho_1(Z)$. Furthermore, cash-subadditivity of $\rho_1$ (hence $L^\infty$-continuity of $\rho_1$ by Proposition 2.1, b), of \citet{CVMMM11}) implies continuity of $f$ on $\mathbb{R}_+$. 
    
    Set now:
    \begin{equation*}
    \bar{k} \triangleq \rho_1(X) + \varepsilon - \rho_1(Z) > 0.
    \end{equation*}
    By cash-subadditivity of $\rho_1$, it then follows that
    \begin{equation*}
    f(\bar{k}) = \rho_1(Z - \bar{k}) \geq \rho_1(Z) + \bar{k}= \rho_1(X) + \varepsilon.
    \end{equation*}
    Since $f$ is continuous on $[0, \bar{k}]$ and its image contains the interval $[\rho_1(Z), \rho_1(X) + \varepsilon]$, by the Intermediate Value Theorem there exists $k^* \in [0, \bar{k}]$ such that:
     \begin{equation*}
    f(k^*) \in [\rho_1(X) - \varepsilon, \rho_1(X) + \varepsilon],
     \end{equation*}
    i.e., $\rho_1(Z-k^*)\in [\rho_1(X) - \varepsilon, \rho_1(X) + \varepsilon] $. So, $Z^* = Z - k^* \in \mathcal{U}_X$.
    
     By the above arguments, the family $\U$ is c-quasi-convex (but it fails to satisfy monotonicity).
\end{example}

\begin{example}\label{ex: solid-mon}
We now slightly modify Example \ref{ex: qco} in order to obtain a family $\U$ also satisfying solidity and monotonicity.

Set
\begin{equation}
\label{eq: ex U2}
    \mathcal{U}_X = \left\{ Z \in L^\infty: \rho_1(Z)\leq \rho_1(X)+ \varepsilon\right\}, \quad \mbox{ for any } X \in L^{\infty}.
\end{equation}

c-quasi-convexity can be verified as in Example \ref{ex: qco}.
Solidity is an immediate consequence of monotonicity of $\rho_1$.
Regarding monotonicity, let $X, Y \in L^\infty$ with $X \leq Y$. If $Z \in \mathcal{U}_Y$, then, by monotonicity of $\rho_1$,
\begin{equation*}
\rho_1(Z) \leq \rho_1(Y) + \varepsilon \leq \rho_1(X) + \varepsilon,
\end{equation*}
so $Z \in \mathcal{U}_X$, and thus
$\mathcal{U}_Y \subseteq \mathcal{U}_X$.

To sum up, the family $\U$ in \eqref{eq: ex U2} satisfies c-quasi-convexity, solidity and monotonicity, hence, by Lemma \ref{lem: c-qco}, ii), also quasi-convexity. 
\end{example}

\begin{remark}
To be more concrete, two relevant cases of risk measures $\rho_1$ falling within the setting of Examples \ref{ex: qco} and \ref{ex: solid-mon} are, for instance,
those taking into account ambiguity with respect to discount factors and those based on generalized entropy.
\smallskip

\emph{Ambiguity with respect to discount factors.}  
Example 7.2 in \citet{ELK-R} covers the case of a regulator who is evaluating the riskiness of financial positions under ambiguity about the (stochastic) discount factor \( D \), which she/he believes lies within a known range $[d_{\rm min}, d_{\rm max}]$ with $d_{\rm min} , d_{\rm max} \in [0,1]$ being two constants.
If the initial (spot) risk measure is \( \rho_0 \), then the risk measure $\rho_1$ that takes into account ambiguity on the discount factor via worst-case scenarios, would be defined as
$$
\rho_1(X)= \sup_{0\leq d_{\rm min} \leq D \leq d_{\rm max}\leq 1 } \rho_0(D X), \quad \mbox{ for any } X \in L^{\infty}.
$$
See \citet{ELK-R} for a detailed treatment.

\smallskip

\emph{Generalized entropy.}
q-entropic risk measures on losses are introduced in \cite{DNRG} to provide a family of cash non-additive (convex) risk measures generalizing entropic ones and, in a dynamic setting and once modified, fulfilling the so-called horizon longevity. Namely, a q-entropic risk measure on losses is defined as
\begin{equation} \label{eq: q-entropy}
      \rho_1(X)= \ln_q \mathbb{E}[\exp_q((X+\beta)^-)], \mbox{ for } q \in (0,1),
\end{equation}
where $\beta >0$ is a fixed target, $\ln_q$ and $\exp_q$ generalize the the exponential and logarithmic functions\footnote{We recall from \citet{Tsallis88,Tsallis09} that
\begin{align*}
\exp_q(x) &\triangleq \big[1 + (1 - q)x\big]^{\frac{1}{1 - q}}
\begin{cases}
\text{ defined for any } x \geq \dfrac{1}{q - 1}\; \text{ when } q \in (0,1) \\
\text{ defined for any } x < \dfrac{1}{q - 1}\;  \text{ when } q > 1
\end{cases}
 \\
\ln_q(x) &\triangleq \dfrac{x^{1 - q} - 1}{1 - q}
\begin{cases}
\text{ defined for any } x \geq 0\;  \text{ when } q \in (0,1) \\
\text{ defined for any } x > 0\;  \text{ when } q > 1
\end{cases}.
\end{align*}}, and $\ln_q \mathbb{E}[\exp_q(\cdot)]$ is the generalized entropic risk measure (see \citet{Tsallis88,Tsallis09}). 

As argued in \citet{MaTian21}, compared with the entropic risk measure, the  generalized entropic risk measure does not satisfy the cash-additivity but satisfies convexity and cash-subadditivity for any $q\in (0,1)$. 
The definition of q-entropic risk measure as in \eqref{eq: q-entropy} also guarantees that it can be defined for all risky positions and that is convex everywhere (see \cite{DNRG}).
\end{remark}

\section{Conclusions}
\label{sec:conclusions}
This paper develops a comprehensive framework for the robustification of risk measures beyond the classical convex and cash-additive setting. We introduced two complementary approaches for constructing robust quasi-convex risk measures: one based on a quasi-convex base functional and convex uncertainty sets, and another relying on quasi-convex (or c-quasi-convex) uncertainty families applied to a general risk measure. These two schemes clarify how the geometry of the uncertainty sets and the regularity of the base risk measure jointly determine the properties of the robust measure.

Within this framework, we established general results on the preservation of monotonicity, (quasi-)convexity, law invariance, and continuity from above under robustification. We also characterized the largest, or consolidated,
uncertainty family associated with a given robust risk measure, extending earlier results from the convex and cash-additive literature. Building on the dual representations of \citet{CVMMM11} and \citet{frittelli-maggis}, we derived penalty-type dual forms for robust quasi-convex and cash-subadditive risk measures and show that the classical convex, cash-additive case emerges as a special instance of our general formulation.

Further, we analyzed acceptance families under robustification, establishing the relation between the level sets of the base and robust risk measures and demonstrate how robustness can be interpreted as a systematic enlargement of acceptance sets to account for model ambiguity. The theoretical results were illustrated through representative examples based on Wasserstein distance and $p$-norm balls and c-quasi-convex uncertainty families inspired by set-valued analysis. Finally, we extended the analysis to capital allocation rules, providing a consistent approach under ambiguity.

The proposed framework unifies and extends existing results on robust convex risk measures, offering a general and flexible foundation for risk measurement and allocation in the presence of frictions, illiquidity, or discounting ambiguity. Future research may address dynamic extensions, stochastic dominance consistency, or empirical applications to portfolio selection and stress testing under ambiguity.

\end{document}